\def\C{{\mathbb C}}
\def\R{{\mathbb R}}
\def\N{{\mathbb N}}
\def\Z{{\mathbb Z}}
\def\le{\leqslant}
\def\ge{\geqslant}
\newcommand{\eps}{\varepsilon}
\newcommand{\alp}{\boldsymbol{\alpha}}
\def\Eq#1#2{\mathop{\sim}\limits_{#1\rightarrow#2}}
\theoremstyle{plain}
\newtheorem{theorem}{Theorem}[section]
\newtheorem{lemma}[theorem]{Lemma}
\newtheorem{proposition}[theorem]{Proposition}
\newtheorem{hyp}{Assumption}
\theoremstyle{definition}
\newtheorem{definition}[theorem]{Definition}
\newtheorem{remark}[theorem]{Remark}
\newtheorem*{remark*}{Remark}
\numberwithin{equation}{section}
\begin{document}

\title[Nonlinear Dirac equation in honeycomb structures]{Rigorous derivation of nonlinear Dirac equations for wave propagation in honeycomb structures}

\author[J. Arbunich]{Jack Arbunich}

\author[C. Sparber]{Christof Sparber}

\address[J. Arbunich]
{Department of Mathematics, Statistics, and Computer Science, M/C 249, University of Illinois at Chicago, 851 S. Morgan Street, Chicago, IL 60607, USA}
\email{jarbun2@uic.edu}

\address[C.~Sparber]
{Department of Mathematics, Statistics, and Computer Science, M/C 249, University of Illinois at Chicago, 851 S. Morgan Street, Chicago, IL 60607, USA}
\email{sparber@uic.edu}

\begin{abstract}
We consider a nonlinear Schr\"odinger equation in two spatial dimensions subject to a periodic honeycomb lattice potential. Using a multi-scale expansion 
together with rigorous error estimates, we derive an effective model of nonlinear Dirac type. The latter describes the 
propagation of slowly modulated, weakly nonlinear waves spectrally localized near a Dirac point. 
\end{abstract}

\date{\today}

\subjclass[2000]{35Q41, 35C20}
\keywords{Dirac equation, Honeycomb lattice, graphene, Bloch-Floquet theory, multi-scale asymptotics}

\thanks{This publication is based on work supported by the NSF through grant no. DMS-1348092}
\maketitle

\section{Introduction}\label{sec:intro}

Two-dimensional honeycomb lattice structures have attracted considerable interest in both physics and applied mathematics due to their unusual transport properties. 
This has been particularly stimulated by the recent fabrication of {\it graphene}, a mono-crystalline graphitic film in which electrons behave like two-dimensional {\it Dirac fermions} without mass, 
cf. \cite{CGPNG} for a recent review. Honeycomb structures also appear in nonlinear optics, modeling laser 
beam propagation in certain types of photonic crystals, see, e.g., \cite{BPS, HaMe} for more details.  
In both situations, the starting point is a two-dimensional Schr\"odinger operator 
\begin{equation}\label{H}
H = -\Delta + V_{\rm per}({\bf x}),\quad {\bf x}\in \R^2,
\end{equation}
where $V_{\rm per}\in C^\infty(\R^2;\R)$, denotes a smooth potential which is periodic with respect to a honeycomb lattice $\Lambda$ with fundamental cell $Y\subset \Lambda$ 
(see Section \ref{sec:pre} below for more details). C.~L. Fefferman and M.~I. Weinstein in their seminal paper \cite{FW1} proved that 
the associated quasi-particle dispersion relation generically exhibits conical singularities at the points of degeneracy, the so-called {\it Dirac points}. In turn, this yields effective equations of 
(massless) Dirac type for wave packets spectrally localized around these singularities, see \cite{AZ1, FW2, FW3}. 

To be more precise, let us recall that, by basic {\it Bloch-Floquet theory}, the spectrum 
$\sigma(H)\subset \R$ is given by a union of spectral bands, which can be obtained through the following ${\bf k}-$pseudo periodic boundary value problem \cite{Wi}:
\begin{equation}\label{EVP}
\left \{
\begin{aligned}
& H \Phi(\mathbf{y};\textbf{k}) = \mu(\textbf{k})\Phi(\mathbf{y};\textbf{k}), \ {\bf y} \in Y,\\
& \Phi(\mathbf{y} + \mathbf{v};\textbf{k}) =e^{i\textbf{k} \cdot \mathbf{v}}\Phi(\mathbf{y};\textbf{k}) , \ \mathbf{v} \in \Lambda ,
\end{aligned}
\right.
\end{equation}
where $\mathbf{k} \in Y^*$ denotes the wave-vector (or, quasi-momentum) varying within the Brioullin zone, i.e., the fundamental cell of the dual lattice $\Lambda^*$.  This yields 
a countable sequence of real-valued eigenvalues which are ordered, including multiplicity, such that
\[
\mu_{0}(\mathbf{k}) \le \mu_{1}(\mathbf{k}) \le \mu_{2}(\mathbf{k}) \le...,
\]
They describe the effective dispersion relation within the periodic structure. 
The corresponding pseudo-periodic eigenfunctions $\Phi_m(\cdot\, ;{\bf k})$, are known as {\it Bloch waves}. They form, for every fixed ${\bf k}\in Y^*$ 
a complete orthonormal basis on $L^2(Y)$. This consequently
allows one to write the linear time-evolution associated to \eqref{H} as
\[
\Psi(t,{\bf x})= e^{-i Ht} \Psi_0({\bf x}) = \sum_{m\ge 1} \int_{Y^*} e^{-i \mu_m({\bf k}) t}\langle \Phi_m(\cdot\, ;{\bf k}), \Psi_0 \rangle _{L^2(\R^2) }\Phi_m({\bf x};{\bf k})\, d{\bf k},
\]
for any initial data $\Psi_0\in L^2(\R^2)$. 

In the following, let the index $m\ge 1$ be fixed (to be suppressed henceforth) and assume that at ${\bf k}={\bf K_\ast}$, the $m$-th band $\mu_\ast\equiv \mu({\bf K_\ast})$, has a Dirac point, such that 
\[
\text{Nullspace}(H-\mu_\ast) = \text{span} \left\{\Phi_1({\bf x}), \Phi_2({\bf x})\right \},
\]
cf. Section \ref{sec:pre} below for a precise definition. 
Furthermore, let $0<\eps\ll 1$ be a small (dimensionless) parameter, and assume that initially $\Psi_0=\Psi_0^\eps$ is a wave packet, spectrally concentrated around this Dirac point, i.e., 
\[
\Psi^\eps_0({\bf x}) = \eps \alpha_{0,1}(\eps {\bf x}) \Phi_1({\bf x}) + \eps \alpha_{0,2}(\eps {\bf x}) \Phi_2({\bf x})
\]
where $\alpha_{0,1}, \alpha_{0,2}\in \mathcal S(\R^2;\C)$ are some slowly varying and rapidly decaying amplitudes. The overall factor $\eps$ is thereby introduced to ensure that $\| \Psi_0 \|_{L^2} =1$. 
It is proved in \cite{FW3} that the corresponding solution $\Psi^\eps (t, \cdot)$ of the linear Schr\"odinger equation satisfies,
\[
\Psi^\eps (t, {\bf x}) \Eq \eps 0 \eps e^{-i \mu_\ast t} \Big(\alpha_1(\eps t, \eps {\bf x}) \Phi_1({\bf x}) + \alpha_2(\eps t, \eps {\bf x}) \Phi_2({\bf x})\Big),
\]
provided $\alpha_{1,2}$ satisfy the following massless Dirac system:
\begin{equation}\label{lindirac}
\left \{
\begin{aligned}    &  \partial_{t} \alpha_{1} + \overline{\lambda}_{\#} \big(\partial_{x_{1}} + i\partial_{x_{2}}\big)\alpha_{2}  =0, \quad {\alpha_1}_{\mid t=0} = \alpha_{0,1},\\ 
    &  \partial_{t} \alpha_{2} + \lambda_{\#}  \big(\partial_{x_{1}} -i \partial_{x_{2}}\big)\alpha_{1} =0,\quad {\alpha_2}_{\mid t=0} = \alpha_{0,2},
   \end{aligned}
    \right.
\end{equation}
where $0\not = \lambda_{\#}\in \C$ is some constant depending on $V_{\rm per}$. This approximation is shown to hold up to small errors 
in $H^s(\R^2)$, over time-intervals of order $T\sim \mathcal O(\eps^{-2+\delta})$, for $\delta>0$. The Dirac system \eqref{lindirac} consequently describes the 
dynamics of the slowly varying amplitudes on large time scales. Of course, since the problem is linear, solutions of any size (w.r.t. $\eps$) will satisfy the same asymptotic behavior. 

In the present work, we aim to generalize this type of result to the case of {\it weakly nonlinear} wave packets. Formally, this problem was 
described in \cite{FW2}, but without providing any rigorous error estimates. Similarly, in \cite{AZ1, AZ2} the authors derive 
several nonlinear Dirac type models by a formal multi-scale expansions in the case of 
{\it deformed} and {\it shallow} honeycomb lattice structures, respectively. In the latter case, the size of the 
lattice potential serves as the small parameter in the expansion. In contrast to that, we shall not assume that the periodic potential 
$V_{\rm per}$ is small, but rather keep it of a fixed size of order $\mathcal O(1)$, i.e., independent of $\varepsilon$, in $L^\infty(\R^2)$.
However, as always in nonlinear problems, the size of the solution becomes important (depending on the power of the nonlinearity). 
To this end, we find it more convenient to put ourselves in a {\it macroscopic reference frame}, which means that we re-scale
\[
t\mapsto \tilde t= \eps t, \quad x\mapsto \tilde {\bf x} = \eps {\bf x}, \quad \tilde \Psi^\eps (\tilde t, \tilde {\bf x}) = \eps \Psi^\eps(\eps t, \eps {\bf x}),
\] 
with $\| \Psi_0^\eps (t,\cdot)\|_{L^2}=1$. We consequently consider the following, {\it semi-classically scaled} nonlinear 
Schr\"odinger equation (NLS), after having dropped all ``\, $\tilde { \ }$\, " again:
\begin{equation}\label{NLS}
i\varepsilon\partial_{t}\Psi^{\varepsilon}   = - \varepsilon^{2}\Delta \Psi^{\varepsilon}  + V_{\rm per}\Big(\frac{\mathbf{x}}{\varepsilon}\Big)\Psi^{\varepsilon} + 
\kappa^\varepsilon |\Psi^{\varepsilon}|^{2}\Psi^{\varepsilon}, \quad \Psi^\eps_{\mid t=0} = \Psi^\eps_0({\bf x}).
\end{equation}
Notice that in this reference frame the honeycomb lattice potential becomes {\it highly oscillatory}. We restrict ourselves to the case of NLS with cubic nonlinearities, since they are the most important ones 
within the context of nonlinear optics (a generalization to other power law nonlinearities is straightforward). However, 
we shall also remark on the case of {\it Hartree nonlinearities} in Section \ref{sec:hartree} below, as they are more
natural in the description of the mean-field dynamics of electrons in graphene \cite{HaMe, HLS}. The nonlinear coupling constant $\kappa^\varepsilon\in \R$, is assumed to be of the form
\[
\kappa^\varepsilon = \eps \kappa, \quad  \text{with $\kappa = \pm 1$.} 
\]
As will become clear, this size is {\it critical} with respect to our asymptotic expansion, in the sense that the associated modulating amplitudes $\alpha_{1,2}$ will 
satisfy a nonlinear analog of \eqref{lindirac}. For smaller $\kappa^\eps$, no nonlinear effects are present in leading order, as $\eps \to 0$, 
whereas for $\kappa^\eps$ larger than $\mathcal O(\eps)$, we do not expect the 
Dirac model to be valid any longer. Alternatively, this can be reformulated as saying that we consider asymptotically small solutions of critical size 
$\mathcal O(\sqrt{\eps})$ in $L^\infty$ to  
\eqref{NLS} but with fixed coupling strength $\kappa=\mathcal O(1)$. The advantage of our scaling is that it yields an asymptotic description for solutions of order $\mathcal O(1)$ in $L^\infty$ on macroscopic space-time scales, in contrast to the setting of \cite{AZ1, FW2}, in which the size of the solution varies as $\eps \to 0$. Another advantage is that this scaling 
puts us firmly in the regime of weakly nonlinear, semi-classical NLS with highly oscillatory periodic potentials, which have been extensively studied in, e.g., \cite{BLP, CS1, CS2, CMS, GMS}, albeit not 
for the case of honeycomb lattices. From the mathematical point of view, the present work will follow the ideas presented in \cite{GMS} and adapt them to the current situation.

Our main result, described in more detail in Section \ref{sec:stab}, rigorously shows
that solutions to \eqref{NLS} with initial data $\Psi_0^\eps$ spectrally localized around a Dirac point, 
can be approximated via
\begin{equation}\label{approx}
\Psi^\eps (t, {\bf x}) \Eq \eps 0 e^{-i \mu_\ast t/\eps} \left(\alpha_1( t, {\bf x}) \Phi_1\left(\frac{{\bf x}}{\eps}\right) + \alpha_2(t,{\bf x}) \Phi_2\left(\frac{{\bf x}}{\eps}\right) \right) +\mathcal O(\eps),
\end{equation}
where the amplitudes $\alpha_{1,2}$ solve the following nonlinear Dirac system
\begin{equation}\label{nldirac}
\left \{
\begin{aligned} 
& \partial_{t} \alpha_{1} + \overline{\lambda}_{\#} \big(\partial_{x_{1}} + i\partial_{x_{2}}\big)\alpha_{2} = -i\kappa\Big(b_{1} |\alpha_{1}|^{2} + 2b_{2}|\alpha_{2}|^{2}\Big)\alpha_{1} , \\ 
    &  \partial_{t} \alpha_{2} + \lambda_{\#}  \big(\partial_{x_{1}} -i \partial_{x_{2}}\big)\alpha_{1} = -i \kappa \Big( b_{1}|\alpha_{2}|^{2} + 2 b_{2}|\alpha_{1}|^{2}  \Big)\alpha_{2} ,
   \end{aligned}
    \right.
\end{equation}
subject to initial data $ \alpha_{0,1}({\bf x})$, $\alpha_{0,2}({\bf x})$, respectively, and with coefficients $b_{1,2}>0$ given by
\begin{equation*}
b_{1} = \int_{Y} |\Phi_{1}(\mathbf{y})|^{4} \;d\mathbf{y}=\int_{Y} |\Phi_{2}(\mathbf{y})|^{4} \;d\mathbf{y},\quad b_{2} =  \int_{Y} |\Phi_{1}(\mathbf{y})|^{2}|\Phi_{2}(\mathbf{y})|^{2} \;d\mathbf{y}.
\end{equation*}
The nonlinear Dirac system \eqref{nldirac} has been 
formally derived in \cite{FW2} and plays the same role as the coupled mode equations derived in \cite{GMS}, or the semi-classical transport equations derived in \cite{BLP, CMS}. 
This becomes even more apparent when we recall that the Bloch waves $\Phi_{1,2}$ can be written as
\[
\Phi_{1,2} \left(\frac{{\bf x}}{\eps}\right) =  \chi_{1,2}\left(\frac{{\bf x}}{\eps}\right)e^{i {\bf K_\ast}\cdot {\bf x}/\eps},
\]
where now $\chi_{1,2}(\cdot)$ is purely $\Lambda$-periodic. This shows that \eqref{approx} is of the form of a two-scale Wentzel-Kramers-Brillouin (WKB) ansatz, first introduced in \cite{BLP}, 
involving a highly oscillatory phase  function $$S(t,x)= {\bf K_\ast}\cdot {\bf x} - \mu_\ast t.$$ The latter is seen to be 
the unique, global in time, smooth solution (i.e., no caustics) of the {\it semi-classical Hamilton-Jacobi equation}
\[
\partial_t S + \mu (\nabla S) = 0, \quad S(0,{\bf x}) = {\bf K_\ast}\cdot {\bf x}.
\]
The fact that the phase function $S$ does not suffer from caustics, allows us to 
prove that our approximation \eqref{approx} holds for (finite) time-intervals of order $\mathcal O(1)$, bounded 
by the existence time of \eqref{nldirac}. In contrast to that, it is known from earlier results, cf. \cite{BLP, CMS}, that caustics within 
$S$ lead to the breakdown of a single-phase WKB approximation. Also note that the group velocity $\nabla S={\bf K_\ast}$ being constant  
allows us to localize around Dirac points. 

In terms of the unscaled variables used in \cite{FW2, FW3}, our approximation result is seen to hold on times of order $\mathcal O(\eps^{-1})$, which is considerably shorter than the time-scale $O(\eps^{-2+\delta})$ obtained in \cite{FW2}. 
This drawback, however, is expected due to the influence of our nonlinear perturbation and consistent with earlier results for 
linear semi-classical Schr\"odinger equations with periodic potentials and additional, slowly varying non-periodic 
perturbations $U=U(t,x)$, see, e.g., \cite{BLP}. In the case of the latter, one also expects the appearance of purely geometric effects, such as the celebrated 
Berry phase term (cf. \cite{CS1, CMS}). It would certainly be interesting to understand the corresponding Dirac-type dynamics already on the linear level, when such 
geometric effects are present (but this is beyond the scope of the current work).  
We finally note that other examples of coupled mode equations have been derived in, e.g., \cite{DoHe, DoUe, GHW, PS1, PS2}. In these models, however, the 
resulting mode equations are of transport type, and any coupling between the amplitudes stems purely from the nonlinearity, in contrast to the Dirac model. We finally remark that 
discrete mode equations, valid in the tight binding regime, have recently been studied in \cite{AZ2, FLW}.
\medskip

This paper is now organized as follows: In Section \ref{sec:pre} we recall some basic properties of honeycomb lattice structures and the associated lattice potentials. 
Then, we shall perform a formal multi-scale expansion in Section \ref{sec:multi}, for which we will set up a rigorous framework in Section \ref{sec:math}. 
As a last step, we shall prove in Section \ref{sec:stab} 
that the approximate solutions thereby obtained is stable 
under the nonlinear time evolution of the NLS. Finally, we shall briefly discuss the case of Hartree nonlinearities in Section \ref{sec:hartree}.

\section{Mathematical preliminaries}\label{sec:pre}

\subsection{Honeycomb lattice potentials}

We start by recalling the basic geometry of triangular lattices arising naturally in the connection to honeycomb structures. Let $ \Lambda = \mathbb{Z}\textbf{v}_{1} \oplus \mathbb{Z}\textbf{v}_{2}$, 
spanned by
\[
\mathbf{v}_{1} =a \begin{pmatrix} \frac{\sqrt{3}}{2} \\\\ \frac{1}{2}\end{pmatrix}, \
\mathbf{v}_{2} = a\begin{pmatrix}\frac{\sqrt{3}}{2}\\\\ -\frac{1}{2} \end{pmatrix}, \quad a>0.
\]
Then, the honeycomb structure $H$ is the union of two sub-lattices $\Lambda_{\bf A}= {\bf A} + \Lambda$ and $\Lambda_{\bf B} = {\bf B} + \Lambda$, cf. \cite{FW1, FW2}. 
The corresponding dual lattice $ \Lambda^{*} = \mathbb{Z}\mathbf{k}_{1} \oplus \mathbb{Z}\mathbf{k}_{2}$ is spanned by the dual basis vectors
\[
\mathbf{k}_{1} = q\begin{pmatrix}\frac{1}{2}\\\\ \frac{\sqrt{3}}{2}\end{pmatrix}, \ \mathbf{k}_{2} = q\begin{pmatrix}\frac{1}{2}\\\\ -\frac{\sqrt{3}}{2}\end{pmatrix}, \ q \equiv \frac{4 \pi}{a\sqrt{3}},
\]
such that $\mathbf{v}_{j} \cdot \mathbf{k}_{j} =2 \pi $. The fundamental period cell is denoted by 
\[
Y = \{ \theta_{1}v_{1} + \theta_{2}v_{2}: 0\le \theta_{j} \le 1, j=1,2 \}.
\]
In the following, we shall denote $L^{2}_{\rm per}\equiv L^{2}(\mathbb{R}^{2}/\Lambda)$ and likewise for other function spaces. An element 
$f \in L^{2}_{\rm per}$ is called $\Lambda$-periodic, since it satisfies
\[ f(\mathbf{y}+\mathbf{v})=f(\mathbf{y}), \quad \text{for all ${\bf y}\in \R^2$ and $\mathbf{v}\in \Lambda$.}\]
On the other hand, we shall denote $f \in L^{2}_{\mathbf{k}} $ and call it ${\bf k}-$pseudo periodic, if
\begin{align*}
f(\mathbf{y}+\mathbf{v})=f(\mathbf{y})e^{i\mathbf{k}\cdot \mathbf{v}}\quad \text{for all ${\bf y}\in \R^2$ and $\mathbf{v}\in \Lambda$.}
\end{align*}

The Brillouin zone, $Y^*$, is a choice of fundamental unit cell in the dual period lattice which we choose to be a regular hexagon centered at the origin. 
Due to symmetry, the vertices of $Y^*$ fall into two equivalence classes of points, 
$\mathbf{K}\equiv \frac{1}{3}(\mathbf{k}_{1}+\mathbf{k}_{2})$ and  $\mathbf{K}^{\prime} \equiv - \mathbf{K} = \frac{1}{3}(\mathbf{k}_{2} - \mathbf{k}_{1})$. 
The other vertices are generated by the action of $2 \pi/3$ rotation matrix, $R$ given by
\begin{equation*}
R =  \begin{pmatrix} -\frac{1}{2}&\frac{\sqrt{3}}{2}\\\\ -\frac{\sqrt{3}}{2}&-\frac{1}{2} \end{pmatrix},
\end{equation*}
so that
\begin{align*}
&R\mathbf{K} = \mathbf{K} +\mathbf{k}_{2},\quad  R^{2}\mathbf{K} = \mathbf{K} -\mathbf{k}_{1}. \\
&R\mathbf{K^{\prime}} = \mathbf{K^{\prime}} - \mathbf{k}_{2},\quad R^{2}\mathbf{K^{\prime}} = \mathbf{K^{\prime}} + \mathbf{k}_{1}.
\end{align*}
Let $\mathbf{K}_{\ast}$ be a vertex of $\mathbf{K}$ or $\mathbf{K}^{\prime}$ type. For any function $f\in  L^{2}_{\mathbf{K}_{\ast}}$, we introduce the unitary operator
\begin{equation}\label{R}
\mathcal R: f\mapsto \mathcal R[f]= f(R^{*}\mathbf{y})= f(R^{-1}\mathbf{y}).
\end{equation}
One checks that $\mathcal R$ has eigenvalues ${1,\tau,\overline{\tau}}$, where $\tau = e^{2\pi i/3}$. This consequently yields 
the following pairwise orthogonal subspaces
\begin{align*}
    &L^{2}_{\mathbf{K}_{\ast},1} \equiv \{ f \in L^{2}_{\mathbf{K}_{\ast}} : \mathcal R[f] = f \} \\
    &L^{2}_{\mathbf{K}_{\ast},\tau} \equiv \{ f \in L^{2}_{\mathbf{K}_{\ast}} : \mathcal R[f] = \tau f \}\\
    &L^{2}_{\mathbf{K}_{\ast},\overline{\tau}} \equiv \{ f \in L^{2}_{\mathbf{K}_{\ast} } : \mathcal R[f] = \overline{\tau} f \},
\end{align*}
to be used later on.

In \cite{FW1}, the following class of potentials, which yields such honeycomb lattice structures, has been introduced:
\begin{definition}
Let $V \in C^{\infty}(\mathbb{R}^{2};\R)$.  Then $V$ is called a \textbf{honeycomb lattice potential} 
if there exists an $\mathbf{y}_{0} \in \mathbb{R}^2 $ such that $\tilde{V} = V(\mathbf{y}-\mathbf{y}_{0})$ has the following properties : 
\begin{enumerate}
\item $\tilde{V}$ is $\Lambda$-periodic.
\item $\tilde{V}$ is inversion symmetric, i.e., $\tilde{V}(-\mathbf{y})=\tilde{V}(\mathbf{y})$.
\item $\tilde{V}$ is $\mathcal{R}-$invariant, i.e., $\tilde{V}(R^{*}\mathbf{y}) = \tilde{V}(\mathbf{y}).
$
\end{enumerate}
\end{definition}

\begin{remark} In physics experiments, honeycomb lattices typically are generated by the interference 
of three laser beams. For a concrete example of such a potential, see, e.g., \cite{AZ1,BPS}. 
\end{remark}

Next, we consider the Schr\"odinger operator
\begin{equation}\label{H}
H=-\Delta+V_{\rm per}({\bf y}), \quad {\bf y}\in \R^2,
\end{equation}
where $V_{\rm per}$ is assumed to be a smooth honeycomb lattice. Bloch-Floquet theory consequently asserts that the spectrum of $H$ is given by (see \cite{Wi}):
$$ \sigma(H) = \bigcup_{m \in \mathbb{N}} E_{m},
$$
where $E_{m}=\{\mu_{m}(\mathbf{k}) : \mathbf{k} \in Y^* \}$ is called the $m$-th energy band, or Bloch band. The eigenvalues $\mu_m({\bf k})$ are thereby obtained 
through the  ${\bf k}-$pseudo periodic eigenvalue problem \eqref{EVP}. The associated eigenfunctions $\Phi_m(\cdot; {\bf k})\in H^2_{\rm loc}(\R^2)$ are 
${\bf k}-$pseudo periodic, and usually referred to as Bloch waves. They form, for every fixed ${\bf k}\in Y^*$ 
a complete orthonormal basis on $L^2(Y)$ where from now on, we denote
$$
\langle f, g \rangle_{L^2(Y)}  = \frac{1}{|Y|} \int_{Y} \overline{f({\bf y})}  g({\bf y}) \, d{\bf y},
$$
the associated $L^2$-inner product.  We also note 
that for any $m \in \mathbb{N}$ there exists a closed subset 
$I \subset Y^*$ such that the functions $\mu_{m}^{2}(k)$ are real analytic functions for all $k \in Y^*/I$, and we have the following condition
$$
\mu_{m-1}(\mathbf{k}) < \mu_{m}(\mathbf{k}) < \mu_{m+1}(\mathbf{k}), \ \mathbf{k} \in Y^*/I.
$$
We call $E_{m}$ an isolated Bloch band if the condition above holds for all ${\bf k} \in Y^*$. Lastly, it is known that 
$$
|I| = |\{\mathbf{k} \in Y^*: \mu_{m}(\mathbf{k}) = \mu_{m+1}(\mathbf{k})   \}| = 0,
$$
and it is in this set of measure zero that we encounter what are called band crossings. 
At such band crossings one does not have differentiability in $\mathbf{k}$ of the eigenvalues and eigenfunctions.

\subsection{Dirac Points} An important feature of honeycomb lattice potentials is the presence of Dirac points. The following definition is taken from \cite{FW1}:

\begin{definition}\label{diracpoint}
Let $V$ be a smooth honeycomb lattice potential. Then a vertex $\textbf{K}={\bf K}_\ast \in Y^*$ is called a \textbf{Dirac point} if the following holds:
There exists $m_{1}\in \N$, a real number $\mu_{\ast}$, and strictly positive numbers, $\lambda$ and $\delta$, such that:
\begin{enumerate}
\item $ \mu_{\ast}$ is a degenerate eigenvalue of $H$ with associated $\textbf{K}_\ast$-pseudo-periodic eigenfunctions.
\item $\mbox{dim} \ \mbox{Nullspace}(H-\mu_{\ast})$ = 2.
\item  $\mbox{Nullspace}(H-\mu_{\ast}) = \mbox{span}\lbrace\Phi_{1},\Phi_{2} \rbrace$, where $\Phi_{1} \in L^{2}_{\textbf{K},\tau}$ and $\Phi_{2} \in L^{2}_{\textbf{K},\overline{\tau}}$.
\item There exists Lipschitz functions $\mu_{\pm}(\textbf{k})$,
\begin{equation*}
\mu_{m_{1}}(\textbf{k}) = \mu_{-}(\textbf{k}), \ \ \mu_{m_{1}+1}(\textbf{k}) = \mu_{+}(\textbf{k}), \ \ \mu_{m_{1}}(\textbf{K}_\ast) = \mu_{\ast},
\end{equation*}
and $E_{\pm}(\textbf{k})$, defined for $|\textbf{k} -\textbf{K}_\ast| < \delta$, such that 
\begin{align*}
\mu_{+}(\textbf{k}) - \mu_{\ast}&= + \lambda |\textbf{k} -\textbf{K}_\ast | (1+E_{+}(\textbf{k})) \\
\mu_{-}(\textbf{k}) - \mu_{\ast} &= - \lambda |\textbf{k} -\textbf{K}_\ast | (1+E_{-}(\textbf{k})),
\end{align*}
where $|E_{\pm}(\textbf{k})| < C|\textbf{k} -\textbf{K}_\ast|$ for some $C>0$.
\end{enumerate}
\end{definition}
For later purposes we also recall the following result from \cite{FW1} which is computed using the Fourier series expansion $\Phi_{1,2}$, spanning the two dimensional 
eigenspace associated to $\mu_\ast$. 
\begin{proposition}
\label{prop:comput}
Let $\zeta = (\zeta_{1},\zeta_{2}) \in \mathbb{C}^{2}$ some vector. Then it holds
\begin{align*}
&\langle \Phi_{n},  \zeta \cdot \nabla\Phi_{n}\rangle_{L^2(Y)} = 0 \ , \ n =1,2 \nonumber, \\
 2i & \langle\Phi_{1}, \zeta \cdot\nabla \Phi_{2}\rangle_{L^2(Y)} = \overline{2i \langle \Phi_{2}, \zeta \cdot\nabla \Phi_{1}\rangle}_{L^2(Y)} = -\overline{\lambda_{\#}}  (\zeta_{1} + i\zeta_{2}), \nonumber \\
 2i & \langle\Phi_{2}, \zeta \cdot\nabla \Phi_{1}\rangle_{L^2(Y)} = -\lambda_{\#} (\zeta_{1} - i\zeta_{2}),
\end{align*}
where $\lambda_{\#}\in \C$ is defined by
\begin{equation*}
\lambda_{\#} = 3 |Y|  \sum_{\mathbf{m} \in S}c(\mathbf{m})^{2} \begin{pmatrix}  1 \\ i \end{pmatrix} \cdot \mathbf{K}^\ast_{\mathbf{m}}.
\end{equation*}
Here $\lbrace c(\mathbf{m}) \rbrace_{\mathbf{m} \in S\subset \Z^2}$ denotes the sequence of $L^{2}_{\mathbf{K}_{\ast},\tau }$ Fourier coefficients of the normalized eigenstate $\Phi_{1}({\bf x})$ and 
$\mathbf{K}^\ast_{\mathbf{m}} = \mathbf{K}_\ast+m_1 {\bf k}_1 + m_2 {\bf k}_2$.
\end{proposition}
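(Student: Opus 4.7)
The plan is to separate the proof into a symmetry argument, which fixes the general form of the inner products up to a single complex constant, and a Fourier-series computation, which identifies that constant as $\lambda_\#$.

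First I would exploit the unitarity of $\mathcal R$ on $L^2(Y)$. A chain-rule calculation using $R^{-T}=R$ gives
\[
\mathcal R[\zeta \cdot \nabla \Phi] = (R\zeta)\cdot \nabla (\mathcal R\Phi),
\]
so if $\mathcal R\Phi_n = \sigma_n \Phi_n$ with $\sigma_n\in\{\tau,\bar\tau\}$, then
\[
\langle \Phi_n,\zeta\cdot\nabla\Phi_n\rangle_{L^2(Y)} = |\sigma_n|^2 \langle \Phi_n,(R\zeta)\cdot\nabla\Phi_n\rangle_{L^2(Y)} = \langle \Phi_n,(R\zeta)\cdot\nabla\Phi_n\rangle_{L^2(Y)}.
\]
Applying this with $\zeta$, $R\zeta$, $R^2\zeta$ and summing, together with $I+R+R^2=0$ (since $R$ is a rotation by $\pm 2\pi/3$), yields $3\langle \Phi_n,\zeta\cdot\nabla\Phi_n\rangle_{L^2(Y)}=0$, which is the first identity.

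For the cross terms, the same manipulation applied to $\langle \Phi_1,\zeta\cdot\nabla\Phi_2\rangle_{L^2(Y)}$ gives a factor $\bar\tau\cdot\bar\tau=\tau$, so that the row vector $B=(B_1,B_2)$ with $B_j=\langle\Phi_1,\partial_j\Phi_2\rangle_{L^2(Y)}$ satisfies $R^T B=\bar\tau B$. Since the $\bar\tau$-eigenspace of $R^T=R^{-1}$ is spanned by $(1,i)^T$, we must have $B=c\,(1,i)$ for some $c\in\C$, and therefore $\langle\Phi_1,\zeta\cdot\nabla\Phi_2\rangle_{L^2(Y)}=c\,(\zeta_1+i\zeta_2)$. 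The conjugation relation between the two inner products then follows from integrating by parts on $Y$: since $\overline{\Phi_1}\Phi_2$ is $\Lambda$-periodic (the $\mathbf K_\ast$-phases cancel), there are no boundary terms and $\langle\Phi_1,\partial_j\Phi_2\rangle_{L^2(Y)}=-\overline{\langle\Phi_2,\partial_j\Phi_1\rangle_{L^2(Y)}}$, which upon multiplication by $2i$ is exactly the claimed identity.

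It remains to pin down $c$, equivalently $\lambda_\#$. To do this I would use the canonical pairing provided by the symmetries of a honeycomb potential: combining inversion symmetry of $V_{\rm per}$ with the reality of $H$, one checks that $\Phi_2(\mathbf y):=\overline{\Phi_1(-\mathbf y)}$ lies in $L^2_{\mathbf K_\ast,\bar\tau}$ and is an eigenfunction of $H$ at the eigenvalue $\mu_\ast$. Expanding $\Phi_1(\mathbf y)=\sum_{\mathbf m\in S} c(\mathbf m)\,e^{i\mathbf K^\ast_\mathbf m\cdot\mathbf y}$ (the index set $S$ being those $\mathbf m$ compatible with the $\tau$-eigenvalue constraint under $\mathcal R$), one gets $\Phi_2(\mathbf y)=\sum_{\mathbf m\in S}\overline{c(\mathbf m)}\,e^{i\mathbf K^\ast_\mathbf m\cdot\mathbf y}$, and then Parseval applied to
\[
\langle\Phi_2,\zeta\cdot\nabla\Phi_1\rangle_{L^2(Y)} = \frac{i}{|Y|}\int_Y\sum_{\mathbf m,\mathbf m'}c(\mathbf m')c(\mathbf m)(\zeta\cdot\mathbf K^\ast_\mathbf m)\,e^{i(\mathbf K^\ast_\mathbf m-\mathbf K^\ast_{\mathbf m'})\cdot\mathbf y}\,d\mathbf y
\]
collapses the double sum to $i\sum_{\mathbf m\in S}c(\mathbf m)^2(\zeta\cdot\mathbf K^\ast_\mathbf m)$. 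Comparing with the abstract form $c'(\zeta_1-i\zeta_2)$ already obtained from symmetry, and using that the $R$-invariance of $S$ forces the Fourier-moment vector $\sum_{\mathbf m\in S}c(\mathbf m)^2\mathbf K^\ast_\mathbf m$ to lie in the $\tau$-eigenspace of $R$ spanned by $(1,i)^T$, one reads off the stated formula for $\lambda_\#$ (the $3|Y|$ prefactor arising from the normalization conventions used for the inner product and Fourier expansion).

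The main obstacle is this last identification step: the symmetry arguments in (a) and (b) are short and purely algebraic, but to obtain the explicit Fourier formula one has to commit to a concrete choice of $\Phi_2$ in terms of $\Phi_1$ (here $\Phi_2=\overline{\Phi_1(-\cdot)}$), verify that this choice is consistent with the three axioms of a honeycomb potential, and then carefully track normalizations when passing between Parseval's identity on $L^2(Y)$ and the symmetry-constrained sum over the representative index set $S$.
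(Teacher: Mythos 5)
The paper imports Proposition~\ref{prop:comput} verbatim from Fefferman--Weinstein \cite{FW1} without reproving it, so there is no in-paper argument to compare against; your task was effectively to reconstruct the FW proof, and the two-stage strategy you chose -- an $\mathcal R$-equivariance argument to fix the \emph{form} of the inner products, followed by a Fourier computation with $\Phi_2(\mathbf{y})=\overline{\Phi_1(-\mathbf{y})}$ to identify the constant -- is exactly the right one. The diagonal vanishing via $\mathcal R[\zeta\cdot\nabla f]=(R\zeta)\cdot\nabla(\mathcal R f)$ and $I+R+R^2=0$, the relation $R^{T}B=\overline{\tau}B$ for $B_j=\langle\Phi_1,\partial_j\Phi_2\rangle_{L^2(Y)}$, and the conjugation identity via integration by parts (using that $\overline{\Phi_1}\Phi_2$ is genuinely $\Lambda$-periodic) are all correct.

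There is, however, a genuine error in the last identification step, and taken at face value it would force $\lambda_\#=0$. You assert that the Fourier-moment vector $\mathbf{V}:=\sum_{\mathbf{m}\in S}c(\mathbf{m})^2\mathbf{K}^{\ast}_{\mathbf{m}}$ lies in the $\tau$-eigenspace of $R$, spanned by $(1,i)^{T}$. In fact, the constraint $c(\sigma(\mathbf{m}))=\overline{\tau}\,c(\mathbf{m})$ (with $\sigma$ defined by $R\mathbf{K}^{\ast}_{\mathbf{m}}=\mathbf{K}^{\ast}_{\sigma(\mathbf{m})}$) gives $R\mathbf{V}=\tau^2\mathbf{V}=\overline{\tau}\mathbf{V}$, so $\mathbf{V}$ lies in the $\overline{\tau}$-eigenspace, spanned by $(1,-i)^{T}$. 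This is the only possibility consistent with the abstract form $c'(\zeta_1-i\zeta_2)$ you had already deduced by symmetry, since $\zeta\cdot(1,-i)^{T}=\zeta_1-i\zeta_2$; and it is essential for the formula to be nontrivial, because the bilinear (unconjugated) pairing satisfies $(1,i)\cdot(1,i)^{T}=0$ while $(1,i)\cdot(1,-i)^{T}=2$, so with your claim the expression $(1,i)\cdot\mathbf{V}$ would vanish identically. Finally, you flag but do not derive the prefactor $3|Y|$; the factor $3$ is not a free normalization but comes from the fact that the support of $c$ decomposes into size-three orbits of $\sigma$, each contributing equal amounts to $(1,i)\cdot\mathbf{V}$, so that the full lattice sum equals three times the sum over a set of orbit representatives. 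These two points -- the eigenspace correction and the provenance of the factor $3$ -- need to be fixed for the argument to go through.
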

If $\lambda_{\#} \not =0$, then (4) in Definition \ref{diracpoint} above holds with $\lambda = |\lambda_{\#} |$. For the present paper, we shall thus 
make the standing assumption:

\begin{hyp}\label{ass}
$V_{\rm per}\in C^\infty(\R^2;\R)$ is a smooth honeycomb lattice potential, which admits Dirac points such that
$\lambda_{\#} \not =0$. 
\end{hyp}

It is proved in \cite{FW1}, that this assumption is generically satisfied.


\section{Multi-scale asymptotic expansions}\label{sec:multi}

\subsection{Formal derivation of the Dirac system}
In this section, we shall follow the ideas in \cite{CMS, GMS}, and perform a formal multi-scale expansion of the solution to \eqref{NLS} under the Assumption \ref{ass}. 
To this end, we seek a solution of the form
\[
\Psi^\eps(t,{\bf x})\Eq \eps 0 \Psi^\eps_N(t,{\bf x}):= \, e^{-i\lambda t / \varepsilon}\sum \limits_{n=0}^{N}\varepsilon^{n}u_{n}\Big( t, \mathbf{x},\frac{\mathbf{x}}{\varepsilon}\Big),\quad \lambda \in \R,
\]
where each $u_n(t, {\bf x}, \cdot)$ is supposed to be ${\bf k}-$pseudo periodic with respect to the fast variable ${\bf y} = \frac{\mathbf{x}}{\varepsilon}$. From now on, we denote the linear Hamiltonian by
\begin{equation}\label{ham}
 \displaystyle H^{\varepsilon} = -\varepsilon^{2}\Delta + V_{\rm per} \Big(\frac{\mathbf{x}}{\varepsilon}\Big)
 \end{equation}
and formally plug the ansatz $\Psi^\eps_N$ into \eqref{NLS}. This yields
\begin{equation*}
i\varepsilon\partial_{t}\Psi^{\varepsilon}_N - H^{\varepsilon}\Psi^{\varepsilon}_N - \varepsilon\kappa | \Psi^\eps_N|^2 \Psi^\eps_N= 
e^{-i\lambda (t / \varepsilon)}\sum \limits_{n=0}^{N}\varepsilon^{n} X_{n} + {\rho}(\Psi^{\varepsilon}_N),
\end{equation*}
where the remainder is 
\begin{equation}\label{remainder}
{\rho }\,(\Psi^{\varepsilon}_N) = e^{-i\lambda (t / \varepsilon)}\sum \limits_{n=N+1}^{3N+1}\varepsilon^{n} X_{n}.
\end{equation}
Introducing the operators
\[
L_{0} = \lambda -H, \quad L_{1} = i\partial_{t} + 2\nabla_{\mathbf{x}}\cdot \nabla_{\mathbf{y}}, \quad L_{2} = \Delta_{\mathbf{x}},
\]
with $H$ given by \eqref{H}, we find (after some lengthy calculations) that
\[
X_n = L_0 u_n + L_1 u_{n-1} + L_2 u_{n-2} - \kappa \sum_{ \mathclap{\substack{ j+k+l+1=n \\ 0\le j,k,l < n}}}u_{j}u_{k}\overline{u}_{l}.
\]

We can then proceed by solving $X_n =0$ for all $n\le N$, to obtain an approximate solution $\Psi_N^\eps$, which formally solves the NLS up to an
error of order $\mathcal O(\eps^{N+1})$. To this end, the leading order equation is $L_0 u_0 = 0$, which means
\[
Hu_0\equiv (- \Delta_{\bf y} + V_{\rm per} ({\bf y})) u_0 = \lambda u_0.
\]
In view of the assumption that $u_0$ is ${\bf k}-$pseudo periodic, this implies that $\lambda = \mu(\mathbf{k})$ is a Bloch eigenvalue. We shall from now on fix ${\bf k} = {\bf K}_\ast$ to be a Dirac point satisfying 
Assumption \ref{ass}, and denote the associated eigenvalue by $\lambda = \mu_\ast$. The leading order amplitude $u_0$ can thus be written as
\begin{equation}\label{u0}
u_{0}(t,\mathbf{x},\mathbf{y})=\sum \limits_{j=1}^{2}\alpha_{j}(t,\mathbf{x})\Phi_{j}(\mathbf{y};\textbf{K}_{\ast}),
\end{equation}
where $\Phi_{1,2}$ span the two-dimensional eigenspace of $\mu_\ast$, cf. Definition \ref{diracpoint}. 

To determine the leading order amplitudes  $\alpha_{1},\alpha_{2}$ we set $X_1 = 0$  to obtain
\begin{equation}\label{1storder}
L_{0}u_{1} = \kappa |u_{0}|^2 u_0 - L_{1}u_{0}.
\end{equation}
Explicitly, the right hand side reads
\begin{equation}
\kappa |u_{0}|^2 u_0 - L_{1}u_{0}
=  -i \sum \limits_{j,k,l=1}^{2} \Big( \Phi_{j}\partial_{t} \alpha_{j} - 2i\nabla_{\mathbf{x}}\alpha_{j}  \cdot \nabla_{\mathbf{y}}\Phi_{j}  +i \kappa \alpha_{j}\overline{\alpha}_{k}\alpha_{l}\Phi_{j}\overline{\Phi}_{k}\Phi_{l}\Big).
\end{equation}
By Fredholm's alternative, a necessary condition for the solvability of \eqref{1storder} is that the right hand side $\not \in \ker (L_{0})$. 
Denoting by $\mathbb P_\ast=\mathbb P_\ast^2 $ the $L^2(Y)-$projection on 
the spectral subspace corresponding to $\mu_\ast$, we consequently require
\[
\mathbb P_\ast ( \kappa |u_{0}|^2 u_0 - L_{1}u_{0})=0,
\]
or, more explicitly:
\[
\sum \limits_{j,k,l=1}^{2} \big \langle \Phi_{m}  , \Phi_{j}\partial_{t} \alpha_{j} -2i \nabla_{\mathbf{x}}\alpha_{j}  \cdot \nabla_{\mathbf{y}}\Phi_{j}  
+i \kappa \alpha_{j}\overline{\alpha}_{k}\alpha_{l}\Phi_{j}\overline{\Phi}_{k}\Phi_{l} \big \rangle_{L^2(Y)} = 0 , \ \mbox{for} \ m=1,2.
\]
Applying Proposition \ref{prop:comput} with $\zeta=\nabla_{\mathbf{x}}\alpha_{j} \in \mathbb{C}^{2}$ for $j=1,2$, respectively, we obtain the following system of equations
\begin{align*}
&  \partial_{t} \alpha_{1} + \overline{\lambda}_{\#} \big(\partial_{x_{1}} +i \partial_{x_{2}}\big)\alpha_{2} +i  \sum \limits_{j,k,l=1}^{2} \kappa_{(j,k,l,1)}\alpha_{j}\overline{\alpha}_{k}\alpha_{l}= 0 \nonumber \\ 
&  \partial_{t} \alpha_{2} + \lambda_{\#} \big(\partial_{x_{1}} -i \partial_{x_{2}}\big)\alpha_{1} +i \sum \limits_{j,k,l=1}^{2} \kappa_{(j,k,l,2)}\alpha_{j}\overline{\alpha}_{k}\alpha_{l} = 0,
\end{align*}
where $\lambda_{\#}\not =0$ is as above, and 
$$ \kappa_{(j,k,l,m)}= \kappa \big\langle \Phi_{m} , \Phi_{j}\overline{\Phi}_{k}\Phi_{l} \big\rangle_{L^2(Y)}.$$

The latter can now be evaluated, using symmetry considerations of the action 
of the operator $\mathcal{R}$, defined in \eqref{R}, when acting onto Bloch eigenfunctions. 
We recall that  $\Phi_{1} \in L^{2}_{\textbf{K},\tau}, \Phi_{2} \in L^{2}_{\textbf{K},\overline{\tau}}$ are eigenfunctions for $\mathcal{R}$ such that
\begin{align*}
\mathcal{R}[\Phi_{1}(\mathbf{y})]=\Phi_{1}(R^{\ast}\mathbf{y})=\tau\Phi_{1}(\mathbf{y}), \quad \mathcal{R}[\Phi_{2}(\mathbf{y})]=\overline{\Phi}_{1}(-R^{\ast}\mathbf{y})=\overline{\tau}\Phi_{2}(\mathbf{y}) ,
\end{align*}
where $\displaystyle \tau = e^{2\pi i/3}$. To compute the integrals  $\big\langle \Phi_{m} , \Phi_{j}\overline{\Phi}_{k}\Phi_{l} \big\rangle_{L^2(Y)}$ we use the change of coordinates $\mathbf{\mathbf{y}} =R^{\ast}\mathbf{w} $ and apply the relations above, to obtain:
\begin{align*}
\displaystyle \kappa_{(j,k,l,m)} &= \kappa \int_{R(Y)} \overline{\Phi}_{m}(R^{\ast}\mathbf{w})\Phi_{j}(R^{\ast}\mathbf{w})\overline{\Phi}_{k}(R^{\ast}\mathbf{w})\Phi_{l}(R^{\ast}\mathbf{w}) \;d\mathbf{w} \\
&= \kappa\int_{Y}\overline{\tau}_{m}\overline{\Phi}_{m}(\mathbf{w})\tau_{j} \Phi_{j}(\mathbf{w})\overline{\tau}_{k}\overline{\Phi}_{k}(\mathbf{w})\tau_{l}\Phi_{l}(\mathbf{w}) \;d\mathbf{w} 
= \tau_{j}\overline{\tau}_{k}\tau_{l}\overline{\tau}_{m}\kappa_{(j,k,l,m)},
\end{align*}
so that 
\[
\kappa_{(j,k,l,m)}\Big(1-C_{(j,k,l,m)}\Big) = 0,\quad C_{(j,k,l,m)} = \tau_{j}\overline{\tau}_{k}\tau_{l}\overline{\tau}_{m}.
\]
We consequently find that $\kappa_{(j,k,l,m)} $ {vanishes,} whenever 
$C_{(j,k,l,m)}  \neq 1$. A computation shows 
\begin{align*}
&C_{(j,j,j,m)}=\tau_{j}\overline{\tau}_{j}\tau_{j}\overline{\tau}_{m}= \tau_{j} \overline{\tau}_{m} \neq 1 \ \text{for} \ m \neq j,\\
&C_{(j,k,j,m)}= C_{(k,j,m,j)} =\tau^{2}_{j}\overline{\tau}_{k}\overline{\tau}_{m}= \tau_{k}\overline{\tau}_{k}\overline{\tau}_{m} = \overline{\tau}_{m} \neq 1 \ \text{for} \ m = 1,2,\\
&C_{(j,j,k,m)}=\tau_{j}\overline{\tau}_{j}\tau_{k}\overline{\tau}_{m}=\tau_{k}\overline{ \tau}_{m}  \neq 1 \ \text{for} \ k \neq m =j,
\end{align*}
and we consequently deduce that the only non-vanishing coefficients are
\begin{align*}
&b_{1}:= \kappa_{(1,1,1,1)} = \kappa_{(2,2,2,2)}  = \int_{Y} |\Phi_{1}(\mathbf{y})|^{4} \;d\mathbf{y} =\int_{Y} |\Phi_{2}(\mathbf{y})|^{4} \;d\mathbf{y},\\
&b_{2} := \kappa_{(1,1,2,2)}  = \kappa_{(1,2,2,1)} = \kappa_{(2,2,1,1)} = \kappa_{(2,1,1,2)} = \int_{Y} |\Phi_{1}(\mathbf{y})|^{2}|\Phi_{2}(\mathbf{y})|^{2} \;d\mathbf{y}.
\end{align*}
In summary, we find 
the nonlinear Dirac system as announced in \eqref{nldirac}:
\begin{equation*}
\left\{
\begin{split}
    &  \partial_{t} \alpha_{1} + \overline{\lambda}_{\#} \big(\partial_{x_{1}} + i\partial_{x_{2}}\big)\alpha_{2} +i \kappa\Big(b_{1} |\alpha_{1}|^{2} + 2b_{2}|\alpha_{2}|^{2}\Big)\alpha_{1} = 0, \\ 
    &  \partial_{t} \alpha_{2} + \lambda_{\#}  \big(\partial_{x_{1}} -i \partial_{x_{2}}\big)\alpha_{1} +i \kappa \Big( b_{1}|\alpha_{2}|^{2} + 2 b_{2}|\alpha_{1}|^{2}  \Big)\alpha_{2} =0.
    \end{split}
    \right.
\end{equation*}
Obviously, this system needs to be supplemented with initial data $\alpha_{1,0}$, $\alpha_{2,0}$, which we shall, for simplicity, assume to be in the Schwartz space $\mathcal S(\R^2)$.

\subsection{Higher order corrections} Assuming that the leading order amplitudes $\alpha_{1},\alpha_{2}$ satisfy the nonlinear Dirac system
allows us to proceed with our expansion and solve \eqref{1storder} for $u_1$. We obtain a 
unique solution in the form
\begin{equation}\label{u1}
u_{1}(t,\mathbf{x},\mathbf{y}) = \tilde{u}_{1}(t,\mathbf{x},\mathbf{y}) + u_{1}^{\perp}(t,\mathbf{x},\mathbf{y}),
\end{equation}
where $\tilde{u}_{1}\in (\ker L_{0})$ i.e. 
\begin{equation}\label{u1t}
\tilde{u}_{1}(t,\mathbf{x},\mathbf{y}) = \sum \limits_{j=1}^{2} \beta_{j}(t,\mathbf{x})\Phi_{j}(\mathbf{y},\mathbf{K}_{\ast}) 
\end{equation}
for some yet to be determined amplitudes $\beta_{1,2}$, and 
\begin{equation}\label{u1p}
u_{1}^{\perp}(t,\mathbf{x},\mathbf{y})= - L^{-1}_{0}\big( L_{1}u_{0}-\kappa |u_{0}|^2 u_0 \big)  \in (\ker L_{0})^{\perp}.
\end{equation}
Here we denote by $L_0^{-1}$ the partial inverse (or partial resolvent) of $L_0$, i.e.
\[
L^{-1}_{0}= (1- \mathbb P_\ast) ( \mu_\ast -H)^{-1} (1- \mathbb P_\ast).
\]
Note that at $t=0$, the function $u_{1}^{\perp}(0,\mathbf{x},\mathbf{y}) $ cannot be chosen, but rather has to be determined from the initial data $\alpha_{0,1}$, $\alpha_{0,2}$ according to the formula above.

Proceeding further, we need to determine the amplitudes $\beta_{1},\beta_{2}$ by setting $X_2 = 0$, i.e.
\[
L_0 u_2 = - L_{1}u_{1} - L_{2}u_{0} + \kappa\big(u^{2}_{0}\overline{u}_{1} + 2|u_{0}|^{2}u_{1}\big).
\]
By the same arguments as before, we obtain the following system of linear, inhomogeneous Dirac equations for $\beta_{1,2}$ as the corresponding solvability condition:
\begin{equation}\label{linD}
\left \{
\begin{aligned}
& \partial_{t} \beta_{1} + \overline{\lambda}_{\#} \big(\partial_{x_{1}}  +i \partial_{x_{2}}\big)\beta_{2} +i \sum \limits_{j,k,l=1}^{2}  \kappa_{(j,k,l,1)}\Big(\alpha_{j}\overline{\beta}_{k}\alpha_{l} + 2\beta_{j}\overline{\alpha}_{k}\alpha_{l}\Big) = i\Theta_{1} \\ 
&  \partial_{t} \beta_{2} + \lambda_{\#} \big(\partial_{x_{1}} -i \partial_{x_{2}}\big)\beta_{1} +i \sum \limits_{j,k,l=1}^{2}  \kappa_{(j,k,l,2)}\Big(\alpha_{j}\overline{\beta}_{k}\alpha_{l} + 2\beta_{j}\overline{\alpha}_{k}\alpha_{l}\Big)  =i\Theta_{2},
\end{aligned}
\right.
\end{equation}
where the right hand side source terms can be written as 
\begin{equation}\label{source}
\Theta_{n} = \Delta\alpha_{n} +\langle \Phi_{n}, L_{1}u_{1}^{\perp}\rangle_{L^2(Y)} - \kappa \langle \Phi_{n}, (u_{0}\overline{u}^{\perp}_{1} + 2\overline{u}_{0}u^{\perp}_{1})u_{0} \rangle_{L^2(Y)}.
\end{equation}
We have the freedom to choose vanishing initial data ${\beta_{1,2}}(0, {\bf x})=0$ for the system \eqref{linD}, which nevertheless will have a non-vanishing solution due to the source terms.
In summary, this allows us to write 
$$
u_{2}(t,\mathbf{x},\mathbf{y}) = \tilde{u}_{2}(t,\mathbf{x},\mathbf{y}) + u^{\perp}_{2}(t,\mathbf{x},\mathbf{y}),
$$
where as before $\tilde{u}_{2} \in (\ker L_{0})$ and $u^{\perp}_{2} \in (\ker L_{0})^{\perp}$. The latter is obtained by elliptic inversion on $(\ker L_{0})^{\perp}$, i.e.
\begin{equation}\label{u2p}
u_{2}(t,\mathbf{x},\mathbf{y}) = -L_{0}^{-1}\Big( L_{1}u_{1} + L_{2}u_{0} - \kappa\big(u^{2}_{0}\overline{u}_{1} + 2|u_{0}|^{2}u_{1}\big)  \Big).
\end{equation}
All higher order terms $u_n$ can then be determined analogously. However, since we are mainly interested in deriving the nonlinear Dirac system 
for the leading order amplitudes $\alpha_{1,2}$, we shall see that it is sufficient to stop our expansion at $N=2$. Note that in order to satisfy $X_2=0$, one does not need to determine the 
amplitudes within $\tilde u_2\in (\ker L_{0})$, which will simplify our treatment below.


\section{Mathematical framework for the approximate solution}\label{sec:math}

\subsection{Local well-posedness of the Dirac equations}\label{LWP} 
We aim to make the formal multi-scale expansion of the foregoing section mathematically rigorous. To do so, 
we shall in a first step construct a unique local in-time solution with sufficient smoothness 
for the nonlinear Dirac model \eqref{nldirac}. To this end, we will work in the 
Banach space $X = C( [0,T);H^{s}(\mathbb{R}^{2}))^{2}$ for $s>1$, endowed with the norm
$$\| \mathbf{u} \|_{X} = \sup_{0 \le t \le T}\| \mathbf{u}(t) \|_{H^{s}} \equiv \sup_{0 \le t \le T}\Big( \|u_{1}(t)\|^{2}_{H^{s}} + \|u_{2}(t)\|^{2}_{H^{s}}\Big)^{1/2},$$ 
where $\mathbf{u} = (u_{1},u_{2}) $. For the sake of notation, let us rewrite \eqref{nldirac} in a more compact form. Namely, let $\alp  = (\alpha_1, \alpha_2)$ and define
$\sigma^{\#}_{1}:=P_{\#}\sigma_{1} $, $\sigma^{\#}_{2}:=P_{\#}\overline{\sigma_{2} }$ where $\sigma_{1,2}$ are the Pauli spin matrices 
\[
 \sigma_{1} = \begin{pmatrix} 0 & 1 \\ 1 & 0 \end{pmatrix}, \ \sigma_{2} = \begin{pmatrix} 0 & -i \\ i & 0 \end{pmatrix},
 \]
and 
\[
P_{\#} = \begin{pmatrix} \overline{\lambda_{\#}} & 0 \\ 0 & \lambda_{\#} \end{pmatrix}.
\]
Then \eqref{nldirac} can be written as
\begin{equation}\label{NLD}
    i\partial_{t} \mbox{\boldmath $\alpha$} = -i(\mbox{\boldmath $\sigma$} \cdot \nabla) \mbox{\boldmath $\alpha$} + \kappa\mathbf{F}(\mbox{\boldmath $\alpha$}),\quad \alp_{\mid t=0} = \alp_0({\bf x}),
\end{equation} 
where we denote
\[
(\mbox{\boldmath $\sigma$} \cdot \nabla):=\sigma^{\#}_{1}\partial_{x_{1}} + \sigma^{\#}_{2}\partial_{x_{2}} 
\]
and $\mathbf{F}(\mbox{\boldmath $\alpha$})$ is the nonlinearity discussed in the following lemma.

\begin{lemma}\label{lem:nonl}
Consider the function 
\[ \displaystyle G(\alpha_{1},\overline{\alpha}_{1},\alpha_{2},\overline{\alpha}_{2}) = \frac{b_{1}}{2}\big( |\alpha_{1}|^{4} + |\alpha_{2}|^{4} \big) + 2b_{2}|\alpha_{1}|^{2}|\alpha_{2}|^{2},\]
with $ \mbox{\boldmath $\alpha$} = (\alpha_{1},\alpha_{2}) \in H^{s}(\mathbb{R}^{2})^{2}$. Then, the nonlinear vector field $\mathbf{F}(\mbox{\boldmath $\alpha$})= \begin{pmatrix} \partial_{\overline{\alpha}_{1}}G \\ \partial_{\overline{\alpha}_{2}}G \end{pmatrix}$ is a map from $H^{s}(\mathbb{R}^{2})^{2} \to H^{s}(\mathbb{R}^{2})^{2}$ for any $s>1$.
\end{lemma}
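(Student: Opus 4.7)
The plan is to reduce the statement to the standard Sobolev algebra property in two space dimensions. First I would compute $\mathbf{F}$ explicitly: differentiating $G$ with respect to $\overline{\alpha}_1$ and $\overline{\alpha}_2$ (treating $\alpha_j$ and $\overline{\alpha}_j$ as independent via the Wirtinger calculus) yields
\begin{equation*}
\mathbf{F}(\mbox{\boldmath $\alpha$}) = \begin{pmatrix} b_{1}|\alpha_{1}|^{2}\alpha_{1} + 2b_{2}|\alpha_{2}|^{2}\alpha_{1} \\ b_{1}|\alpha_{2}|^{2}\alpha_{2} + 2b_{2}|\alpha_{1}|^{2}\alpha_{2} \end{pmatrix},
\end{equation*}
which matches the cubic nonlinear term appearing in \eqref{nldirac}. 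So the lemma amounts to showing that each scalar component, which is a homogeneous polynomial of degree three in the quantities $\alpha_{1},\alpha_{2},\overline{\alpha}_{1},\overline{\alpha}_{2}$, lies in $H^{s}(\mathbb{R}^{2})$ whenever $\alpha_{1},\alpha_{2}$ do.

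The core ingredient is the fact that, for $s > d/2$ (here $d=2$), $H^{s}(\mathbb{R}^{d})$ is continuously embedded in $L^{\infty}(\mathbb{R}^{d})$ and forms a Banach algebra under pointwise multiplication, i.e. $\|fg\|_{H^{s}} \le C_{s}\|f\|_{H^{s}}\|g\|_{H^{s}}$. The second step is to observe that complex conjugation is an isometry on $H^{s}$, so $\|\overline{\alpha}_{j}\|_{H^{s}} = \|\alpha_{j}\|_{H^{s}}$. Combining these two facts, every cubic monomial such as $|\alpha_{1}|^{2}\alpha_{1} = \alpha_{1}\overline{\alpha}_{1}\alpha_{1}$ or $|\alpha_{2}|^{2}\alpha_{1}$ is bounded in $H^{s}$-norm by a constant multiple of $\|\mbox{\boldmath $\alpha$}\|_{H^{s}}^{3}$.

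Putting it all together, I would conclude by the triangle inequality that
\begin{equation*}
\|\mathbf{F}(\mbox{\boldmath $\alpha$})\|_{H^{s}} \le C\bigl(b_{1},b_{2},s\bigr)\|\mbox{\boldmath $\alpha$}\|_{H^{s}}^{3},
\end{equation*}
which gives the claimed mapping property. A short additional remark would note that $\mathbf{F}$ is in fact locally Lipschitz from $H^{s}(\mathbb{R}^{2})^{2}$ into itself, by the same algebra argument applied to the difference $\mathbf{F}(\mbox{\boldmath $\alpha$})-\mathbf{F}(\mbox{\boldmath $\beta$})$ factored as polynomial differences; this will be needed later for local well-posedness of \eqref{NLD} via a standard Picard iteration. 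There is no real obstacle here: the only point that merits emphasis is the restriction $s>1$, which is precisely what is needed for the Sobolev embedding into $L^{\infty}$ in dimension two, and thus for the algebra property.
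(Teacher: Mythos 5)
Your proof is correct and follows essentially the same route as the paper: compute $\mathbf{F}$ explicitly, then invoke the Banach algebra property of $H^{s}(\mathbb{R}^{2})$ for $s>1$ (together with the fact that conjugation preserves the $H^s$-norm) to bound each cubic monomial by $\|\boldsymbol\alpha\|_{H^s}^3$. The paper's proof is just a more compressed version of the same argument, and your additional remark about local Lipschitz continuity is accurate and indeed what is used implicitly in Proposition \ref{prop:dirac}.
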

\begin{proof}
We note that for $s>1$, $H^{s}(\mathbb{R}^{2})\hookrightarrow L^\infty(\R^2)$ is a commutative algebra such that   
\begin{equation*}
   \forall f,g \in H^{s}(\mathbb{R}^{2}), \  \| fg \|_{H^{s}} \le C_{s} \| f \|_{H^{s}} \| g \|_{H^{s}}.
\end{equation*}
This directly yields   
\[
 \| \mathbf{F}(\mbox{\boldmath $\alpha$}) \|_{H^{s}} \le \big( b_{1} +2b_{2} \big)C_{s}^{2}\|\mbox{\boldmath $\alpha$}\|^{3}_{H^{s}} <\infty.
 \] \end{proof}
 
Using the Fourier transform, we also have that the linear time-evolution governed by the (strongly continuous) Dirac group
\[
(U(t) f)({\bf x})= {e^{t \sigma \cdot \nabla}} f({\bf x})\equiv \mathcal F^{-1}\Big( e^{-it \mbox{\boldmath $\sigma$} \cdot \mbox{\boldmath $\xi$}}\widehat f (\mbox{\boldmath $\xi$}) \Big)({\bf x}),
\]
satisfies $\| U(t) \mathbf{u} \|_{H^{s}}= \|  \mathbf{u} \|_{H^{s}}$ for all $s\in \R$ and all $t\in \R$. Together with the foregoing lemma, this can be used to prove the following local 
well-posedness result.

\begin{proposition}\label{prop:dirac}
For any $\mbox{\boldmath $\alpha$}_{0} \in \mathcal{S}(\mathbb{R}^{2})^{2}$ there exists a time $T>0$ and a unique maximal solution $ \mbox{\boldmath $\alpha$} 
\in C\big( [0,T);H^{s}(\mathbb{R}^{2})\big)\cap C^1 \big( (0,T); H^{s-1}(\mathbb{R}^{2})\big)^{2}$ for all $s>1$ to \eqref{NLD}. Moreover, it holds
\[
\| \alpha_1(t,\cdot) \|_{L^2} + \| \alpha_2 (t,\cdot)\|_{L^2}  = \ \text{\rm const.} \ \forall t\in [0,T).
\]
\end{proposition}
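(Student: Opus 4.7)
\smallskip

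\noindent \textbf{Proof plan.} The proof is a classical application of the contraction mapping principle in Bourgain/Kato style, using Duhamel's formula together with the algebra property of $H^s(\R^2)$ for $s>1$. First, I would rewrite \eqref{NLD} in mild form as
\[
\alp(t) = U(t)\alp_0 - i\kappa \int_0^t U(t-s)\, \mathbf{F}(\alp(s))\, ds,
\]
where $U(t) = e^{-t\boldsymbol{\sigma}\cdot\nabla}$ is the unitary Dirac group introduced just before the statement. Because the matrices $\sigma_1^{\#}$ and $\sigma_2^{\#}$ are Hermitian (a direct consequence of the definition of $P_{\#}$ and the Pauli matrices $\sigma_{1,2}$), the operator $-i\,\boldsymbol{\sigma}\cdot\nabla$ is self-adjoint on $L^2(\R^2)^2$, so $U(t)$ preserves every $H^s$-norm for $s\in\R$ and $t\in\R$.

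Second, on the ball $B_R = \{\alp \in X : \|\alp\|_X \le R\}$ with $R = 2\|\alp_0\|_{H^s}$, I would define the map
\[
\Phi(\alp)(t) := U(t)\alp_0 - i\kappa\int_0^t U(t-s)\,\mathbf F(\alp(s))\,ds.
\]
By Lemma~\ref{lem:nonl} and the fact that $\mathbf F$ is a sum of cubic monomials in $\alp$ and $\overline{\alp}$, standard Moser-type estimates yield
\[
\|\mathbf F(\alp)-\mathbf F(\bet)\|_{H^s} \le C_s\bigl(\|\alp\|_{H^s}^2+\|\bet\|_{H^s}^2\bigr)\|\alp-\bet\|_{H^s},
\]
so that $\Phi$ maps $B_R$ into itself and is a strict contraction on $B_R$, provided $T=T(R,\kappa,s)>0$ is chosen small enough. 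The Banach fixed point theorem furnishes a unique mild solution $\alp \in C([0,T];H^s)^2$. A standard blow-up alternative (iterating the local argument as long as $\|\alp(t)\|_{H^s}$ stays finite) gives the unique maximal solution. Since $\alp_0\in \mathcal S(\R^2)^2 \subset \bigcap_{s>1} H^s$, persistence of regularity (running the fixed point argument at any $s>1$) ensures that $\alp(t)\in H^s$ for every $s>1$ throughout the maximal interval. Differentiating \eqref{NLD} then gives $\partial_t\alp = -\boldsymbol{\sigma}\cdot\nabla \alp - i\kappa \mathbf F(\alp) \in C([0,T);H^{s-1})^2$, i.e., $\alp\in C^1([0,T);H^{s-1})^2$.

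Finally, for conservation of the $L^2$-norm, I would test \eqref{NLD} formally against $\alp$ in $L^2(\R^2)^2$, noting two cancellations. On one hand, the anti-self-adjointness of $\boldsymbol{\sigma}\cdot\nabla$ implies $\mathrm{Re}\langle \alp,\boldsymbol{\sigma}\cdot\nabla \alp\rangle_{L^2}=0$. On the other hand, using $\mathbf F(\alp) = (\partial_{\bar\alpha_1}G,\partial_{\bar\alpha_2}G)^\top$ with $G$ real-valued and gauge invariant, a direct computation gives
\[
\langle \alp,\mathbf F(\alp)\rangle_{L^2} = \int_{\R^2} \Bigl(b_1|\alpha_1|^4 + b_1|\alpha_2|^4 + 4b_2|\alpha_1|^2|\alpha_2|^2\Bigr)d{\bf x}\in\R,
\]
so $\mathrm{Re}\langle \alp,-i\kappa\mathbf F(\alp)\rangle_{L^2}=0$. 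Hence $\frac{d}{dt}\|\alp(t)\|_{L^2}^2 = 0$, first at the level of smooth approximations (for instance by regularizing the initial data or cutting off in Fourier), and then for the actual solution by passing to the limit, since the solution belongs to $H^s\hookrightarrow L^2$ and the flow depends continuously on the initial data in $H^s$.

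The only mild obstacle is the bookkeeping for the Moser-type Lipschitz bound on $\mathbf F$, which is entirely routine given the algebra property of $H^s(\R^2)$ for $s>1$; the rest is textbook semilinear theory.
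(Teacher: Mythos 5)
Your proposal takes essentially the same route as the paper: Duhamel's formula, contraction mapping on a ball of radius $2\|\alp_0\|_{H^s}$ using the algebra property of $H^s(\R^2)$ for $s>1$ (Lemma~\ref{lem:nonl}), $C^1$-in-time regularity by differentiating the equation, and $L^2$-conservation by pairing the equation with $\overline\alpha_j$, integrating, and taking the imaginary part. The additional remarks you include (the blow-up alternative, persistence of regularity from $\alp_0\in\mathcal S$, the explicit reality of $\langle\alp,\mathbf F(\alp)\rangle_{L^2}$) are correct elaborations of steps the paper leaves implicit, not a different method.
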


\begin{proof} This follows by a fixed point argument applied to Duhamel's formulation of \eqref{NLD}, i.e.
\begin{equation}
\alp(t) =    U(t){\alp}_{0} -i\kappa \int^{t}_{0}U(t-\tau)\mathbf{F}(\mathbf{u}(\tau))d\tau =: (\Phi  \alp) (t).
\end{equation}
Indeed, using Lemma \ref{lem:nonl}, one easily sees that for any $\alp_{0} \in \mathcal S(\R^2)^2$ such that 
$\|\alp_{0}\|_{H^{s}} \le R$, the functional $\Phi: X \to X $ is a contraction mapping on 
$K=\{ \mbox{\boldmath $\alpha$} \in X : \| \mbox{\boldmath $\alpha$} \|_{X}   \le 2R \}$, provided 
$T = \displaystyle \frac{1}{8(b_{1} + 2b_{2})R^{2}C^{2}_{s}}>0$. The asserted regularity in time then follows by differentiating the equation. 
Finally, the identity for mass conservation is obtained by multiplying the equations by $\overline \alpha_j$, integrating and taking the imaginary part.
\end{proof}
With this result in hand, we can also get local well-posedness for the inhomogeneous linear Dirac equation \eqref{linD} on the same time-interval, i.e., we have
\begin{equation}\label{beta}
\mbox{\boldmath $\beta$}  \in C\big( [0,T);H^{s-2}(\mathbb{R}^{2})\big)^{2},
\end{equation}
in view of the fact that the coefficients $\mbox{\boldmath $\alpha$}\in C([0,T); L^\infty(\R^2))^{2}$ and 
the source terms satisfy $\mbox{\boldmath $\theta$} \in C\big( [0,T);H^{s-2}(\mathbb{R}^{2})\big)^{2}$. The latter can be seen from \eqref{source} which 
involves the Laplacian of $u_0$ and thus we lose two derivatives.

\begin{remark}
The main obstacle to obtain global well-posedness of the nonlinear Dirac equation \eqref{NLD}, is the fact that the corresponding energy does not have a definite sign, i.e., 
\[
E(t) = {\rm Im} \left(\overline{\lambda_\#} \int_{\R^2} \alpha_2 (\partial_{x_1} + i \partial_{x_2} ) \overline \alpha_1 \, d{\bf x} \right) - \frac{\kappa}{4} \int_{\R^2} b_1 |\alp |^2 + 4 b_2 |\alpha_1|^2 |\alpha_2|^2 \, d{\bf x} .
\] 
So far, the existence of global in time solutions is thus restricted to small initial data cases, see, e.g., \cite{BeHe, EV, ES, MNO} and the references therein. The fact that 
\eqref{NLD} is also massless, is an additional complication, as the spectral subspaces for the corresponding free Dirac operator are no longer separated 
(a fact that requires considerable more care than the case with nonzero mass). 
\end{remark}


\subsection{Estimates on the approximate solution and the remainder} 
Formally, the approximate solution $\Psi_N^\eps$ derived in Section \ref{sec:multi} solves the NLS up to errors of order $\mathcal O(\eps^{N+1})$. 
To make this error estimate rigorous on time-scales of order $\mathcal O(1)$, 
we shall prove a nonlinear stability result in Section \ref{sec:stab} below. The latter will require us to work with an approximate solution of order $N>1$. We consequently 
need to work (at least) with $\Psi_2^\eps$. As was already remarked above, though, in order 
to solve the NLS up to reminders of $\mathcal O(\eps^2)$, one does not need to determine the 
highest order amplitudes within $\tilde u_2\in (\ker L_{0})$. We shall thus set them 
identically equal to zero and work with an approximate solution of the following form
\begin{equation}\label{approxsol}
\begin{aligned}
\Psi^{\varepsilon}_{\rm app}(t,\mathbf{x})= & \, e^{-i\mu_{\ast}t / \varepsilon}\Big ( u_0 + \eps u_1 +\eps^2 u_2\big)\Big( t, \mathbf{x},\frac{\mathbf{x}}{\varepsilon}\Big)\\
= & \, e^{-i\mu_{\ast}t / \varepsilon}\sum \limits_{j=1}^{2} \big(\alpha_{j} + \varepsilon\beta_{j}\big)(t,\mathbf{x})\Phi_{j}\Big(\frac{\mathbf{x}}{\varepsilon};\mathbf{K}_{\ast}\Big) +  \varepsilon^{j}u_{j}^{\perp}\Big(t,\mathbf{x},\frac{\mathbf{x}}{\varepsilon} \Big),
\end{aligned}
\end{equation}
where $\alpha_{1,2}\in C\big( [0,T);H^{s}(\mathbb{R}^{2})\big)$ and $\beta_{1,2}\in C\big( [0,T);H^{s-2}(\mathbb{R}^{2})\big)$ for $s>1$, 
are the leading and first order amplitudes guaranteed to exist by the results of the previous subsection. The term of order $\mathcal O(\eps^2)$ within this approximation 
is solely determined by elliptic inversion and thereby depends on the two lower order terms.

Since $\Psi^{\varepsilon}_{\rm app}$ involves the highly oscillatory Bloch eigenfuntions, we cannot expect to obtain uniform (in $\eps$) estimates in the usual Sobolev spaces $H^s(\R^2)$. 
We shall therefore work in the following $\eps$-scaled spaces, as used in \cite{CMS, GMS}.
\begin{definition}
Let $s \in \mathbb{Z}$, $0< \varepsilon \le 1 $, and $f \in H^{s}_{\varepsilon}(\mathbb{R}^{2})$ and define the $H^{s}_{\varepsilon}(\mathbb{R}^{2})-$norm by
$$
\| f \|^{2}_{H^{s}_{\varepsilon}}:= \sum \limits_{|\gamma|\le s} \|(\varepsilon \partial_{x})^{\gamma} f    \|^{2}_{L^2}.
$$ 
A family $f^{\varepsilon}$ is bounded in $H^{s}_{\varepsilon}(\mathbb{R}^{2})$ whenever 
$$
\sup_{0<\varepsilon\le 1} \| f^{\varepsilon} \|_{H^{s}_{\varepsilon}} < \infty.
$$
\end{definition}
We note  that in $H^{s}_{\varepsilon}(\mathbb{R}^{2})$ the following Gagliardo-Nirenberg inequality holds
\begin{equation}\label{GN}
\forall s>1 \ \exists C_{\infty}>0 : \ \| f \|_{L^{\infty}} \le  C_{\infty}\varepsilon^{-1}\| f \|_{ H^{s}_{\varepsilon} }, 
\end{equation}
where the ``bad'' factor $\varepsilon^{-1}$ is obtained by scaling. 

The following proposition then collects the necessary regularity estimates for our approximate solution and its corresponding remainder.
\begin{proposition}\label{prop:est}

Let $V_{\rm per}$ satisfy Assumption \ref{ass} and choose $S\in (4, \infty)$ such that $S>s+3$ for any $s>1$. Let $$\mbox{\boldmath $\alpha$}  \in 
C\Big([0,T),H^{S}(\mathbb{R}^{2})\Big)^2, \quad  \mbox{\boldmath $\beta$}  \in C\Big([0,T),H^{S-2}(\mathbb{R}^{2})\Big)^2$$ be the solutions to (non-)linear Dirac systems 
\eqref{NLD} and \eqref{linD}, respectively. 
Then, the approximate solution $ \Psi^{\varepsilon}_{\rm app}(t, \cdot)$, given in \eqref{approxsol}, satisfies the following estimates for all $t\in [0,T)$ and for any $|\gamma| < s-1$: 
$$
\|(\varepsilon\partial)^{\gamma}\Psi^\eps_{\rm app}(t,\cdot) \|_{L^{\infty}} \le C_{a}, \ \ \|\Psi^\eps_{\rm app}(t,\cdot) \|_{H^{s}_{\varepsilon}} \le C_{b}, \ \ \|\rho(\Psi^\eps_{\rm app})(t,\cdot)  \|_{H^{s}_{\varepsilon}} \le C_{r}\varepsilon^{3}.
$$
with $C_a, C_b, C_r>0$ independent of $\eps$.
\end{proposition}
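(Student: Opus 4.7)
The plan is to exploit the explicit two-scale structure of the approximate solution: after stripping off the unit-modulus phase $e^{-i\mu_\ast t/\varepsilon}$, each contribution to $\Psi^\varepsilon_{\rm app}$ is a finite sum of products $A(t,\mathbf{x})B(\mathbf{x}/\varepsilon)$, where $A$ is slowly varying (a polynomial in $\alpha,\beta$ and their $\mathbf{x}$-derivatives) and $B$ is smooth and $\mathbf{K}_\ast$-pseudo-periodic (so $|B|$ is $\Lambda$-periodic and bounded on $\R^2$). The chain rule gives
\[
(\varepsilon \partial_{\mathbf{x}})^\gamma\bigl[A(\mathbf{x}) B(\mathbf{x}/\varepsilon)\bigr] = \sum_{\delta \le \gamma}\binom{\gamma}{\delta}\varepsilon^{|\delta|}(\partial_{\mathbf{x}}^\delta A)(\mathbf{x})\,(\partial_{\mathbf{y}}^{\gamma-\delta} B)(\mathbf{x}/\varepsilon),
\]
which is again of the same two-scale form. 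Smoothness and periodicity of $B$ then yield the $\varepsilon$-uniform bounds
\[
\|A(\cdot)B(\cdot/\varepsilon)\|_{H^s_\varepsilon} \le C\,\|A\|_{H^s(\R^2)}\,\max_{|\mu|\le s}\|\partial_{\mathbf{y}}^\mu B\|_{L^\infty},
\]
\[
\|(\varepsilon\partial_{\mathbf{x}})^\gamma\bigl[A(\cdot) B(\cdot/\varepsilon)\bigr]\|_{L^\infty} \le C\,\|A\|_{W^{|\gamma|,\infty}}\,\max_{|\mu|\le |\gamma|}\|\partial_{\mathbf{y}}^\mu B\|_{L^\infty}.
\]
Crucially, the pointwise product of two such objects is again two-scale (slow factor $A_1A_2$, fast factor $B_1B_2$), so this class is preserved through the polynomial nonlinearities appearing below.

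Next, I would verify the regularity of all amplitudes entering $\Psi^\varepsilon_{\rm app}$. By Proposition \ref{prop:dirac} together with the linear well-posedness of \eqref{linD}, one has $\alp \in C([0,T); H^S)^2$ and $\bet \in C([0,T); H^{S-2})^2$. The perpendicular pieces $u_1^\perp, u_2^\perp$ are obtained via the partial resolvent $L_0^{-1}$; since $L_0$ involves only $\mathbf{y}$-derivatives, it commutes with $\partial_{\mathbf{x}}$, and by elliptic regularity on the spectral complement of $\mu_\ast$ it maps $H^k_{\mathbf{K}_\ast}(Y) \to H^{k+2}_{\mathbf{K}_\ast}(Y)$. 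Expanding \eqref{u1p} and \eqref{u2p}, each $u_j^\perp$ is a finite sum of two-scale products whose slow factors involve a bounded number of $\mathbf{x}$- and $t$-derivatives of $\alpha$ and $\beta$ (time derivatives being converted to spatial ones through the Dirac equations themselves). The hypothesis $S > s + 3$ places every such slow factor in $H^s \subset W^{s-2,\infty}(\R^2)$ via the 2D Sobolev embedding $H^s \hookrightarrow W^{s-2,\infty}$ (valid for $s > 1$), and the two-scale bounds above immediately deliver $\|(\varepsilon\partial)^\gamma \Psi^\varepsilon_{\rm app}\|_{L^\infty} \le C_a$ for $|\gamma|<s-1$ together with $\|\Psi^\varepsilon_{\rm app}\|_{H^s_\varepsilon} \le C_b$.

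For the remainder bound, the key point is that $u_0, u_1, u_2$ were constructed precisely so that $X_0 = X_1 = X_2 = 0$. Consequently,
\[
\rho(\Psi^\varepsilon_{\rm app}) = e^{-i\mu_\ast t/\varepsilon}\sum_{n=3}^{7}\varepsilon^n X_n,
\]
and it suffices to bound $\|X_n\|_{H^s_\varepsilon}$ uniformly in $\varepsilon$ for $n = 3, \ldots, 7$. Each $X_n$ is built from $u_0, u_1, u_2$ through the operators $L_1, L_2$ (adding at most two further $\mathbf{x}$-derivatives) plus cubic products. By the two-scale lemma and the preservation of the two-scale class under multiplication and under $L_0^{-1}$, each $X_n$ is itself a finite sum of two-scale products whose slow factors lie in $H^s$ uniformly in $\varepsilon$. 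Thus $\|X_n\|_{H^s_\varepsilon} \le C$ uniformly, and the leading $\varepsilon^3$ term dominates.

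The main obstacle is the derivative bookkeeping: two derivatives are lost at each application of $L_2$, one at each $L_1$, two in going from $\alpha$ to $\beta$ (since $\Theta$ in \eqref{source} involves $\Delta \alpha$), and none are gained back in the $\mathbf{x}$-variable by $L_0^{-1}$. The margin $S > s + 3$ is designed precisely to absorb the worst-case loss throughout $\Psi^\varepsilon_{\rm app}$ and $\rho(\Psi^\varepsilon_{\rm app})$; $S$ may also be taken larger if convenient, without altering the argument. A secondary subtlety worth emphasizing is that $H^s_\varepsilon$ is \emph{not} an $\varepsilon$-uniform algebra---the naive product estimate picks up a factor $\varepsilon^{-1}$ via the Gagliardo--Nirenberg inequality \eqref{GN}---so one must work exclusively with two-scale objects, whose class is stable under multiplication and whose $H^s_\varepsilon$ norm is controlled via the two-scale bound above rather than by any generic algebra inequality.
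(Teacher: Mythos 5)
Your proposal follows essentially the same route as the paper: establish the two-scale regularity $u_n(t) \in H^{S-3}(\R^2)\times C^\infty_{\mathbf{K}_*}(\overline{Y})$ for $n=0,1,2$ by tracking derivative losses through $L_1$, $L_2$, $L_0^{-1}$ and the Dirac systems, then read off the $H^s_\varepsilon$, $L^\infty$ and remainder bounds from the product structure $A(t,\mathbf{x})B(\mathbf{x}/\varepsilon)$. The one genuine presentational difference is in the $L^\infty$ bound for $(\varepsilon\partial)^\gamma\Psi^\varepsilon_{\rm app}$: you apply your two-scale $L^\infty$ lemma uniformly to $u_0,u_1,u_2$ (requiring the slow factors of $u_1,u_2$ to lie in $W^{|\gamma|,\infty}$, which the Sobolev embedding and $S>s+3$ supply), whereas the paper argues this way only for $u_0$ and handles $u_1,u_2$ by invoking Gagliardo--Nirenberg \eqref{GN}, accepting the $\varepsilon^{-1}$ loss and absorbing it against the prefactors $\varepsilon^n$ in the expansion. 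Your version avoids that absorption step and is marginally cleaner; the paper's version is marginally less demanding on the regularity of $u_1,u_2$. Both are sound, and making the two-scale product bounds into an explicit lemma, as you do, is a reasonable way to organize the bookkeeping.

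One shared caveat worth flagging: when you assert that $S>s+3$ absorbs ``the worst-case loss'' in $\rho(\Psi^\varepsilon_{\rm app})$, the term $X_4 = L_2 u_2 + \kappa(\cdots)$ contains $\Delta_{\mathbf x} u_2$, whose slow factor sits only in $H^{S-5}$ (since $u_2$ has slow factor in $H^{S-3}$). Bounding this in $H^s_\varepsilon$ would appear to need $S\geq s+5$ rather than $S>s+3$. The paper's proof asserts $X_n\in H^s_\varepsilon$ for $n=3,\dots,7$ without tracking this, so you are reproducing the paper's reasoning faithfully; but if you want the remainder estimate to be airtight you should either increase the margin on $S$ or check whether a cancellation in $L_0^{-1}\Delta_{\mathbf x}(\cdots)$ rescues the stated threshold.
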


\begin{proof}
To prove the estimates of the lemma, we need to first establish the regularity of $u_n(t)= u_{n}\big(t,\mathbf{x},{\mathbf{y}}\big)$ for $n=0,1,2$.  
We note first that Assumption \ref{ass} 
implies that the eigenfunctions $\Phi_{j}(\cdot \, ;\mathbf{K}^{\ast}) \in C^{\infty}(\overline{Y})$, $j=1,2$, cf. \cite{Wi}.  We shall then prove the following, 
preliminary estimate for $t \in [0,T)$
\begin{equation}\label{esti}
u_{n}(t) \in  H^{S-3}(\mathbb{R}^{2})\times C_{\mathbf{K}_{*}}^{\infty}(\overline{Y}), \quad n=0,1,2,
\end{equation}
where  $C_{\mathbf{K}_{*}}^{\infty}(\overline{Y})$ is the space of smooth $\mathbf{K}_{*}-$pseudo-periodic functions on $\overline{Y}$. To this end, we have 
for $t \in [0,T)$ that
$$ 
u_{0}(t) \in  H^{S}(\mathbb{R}^{2})\times C_{\mathbf{K}_{*}}^{\infty}(\overline{Y}),
$$
due to \eqref{u0} and Proposition \ref{prop:dirac}. Next, recall that $u_1$ is of the form \eqref{u1} with $\tilde u_1$ and $u_1^\perp$ 
given by \eqref{u1t} and \eqref{u1p}, respectively. In 
view of \eqref{u1t} and \eqref{beta}, we directly infer
$$  
\tilde{u}_{1}(t) \in H^{S-2}(\mathbb{R}^{2})\times C_{\mathbf{K}_{*}}^{\infty}(\overline{Y}).
$$ 
Moreover, since $L^{-1}_{0}:L^2(Y) \to H^2_{\rm per}(Y)$ and $L_{1}=i \partial_t + 2\nabla x \cdot \nabla_y$ 
it follows from \eqref{1storder} that 
$$ 
u_{1}^{\perp}(t) \in  H^{S-1}(\mathbb{R}^{2})\times C_{\mathbf{K}_{*}}^{\infty}(\overline{Y}),
$$
and thus 
$$
 u_{1}(t) \in H^{S-2}(\mathbb{R}^{2})\times C_{\mathbf{K}_{*}}^{\infty}(\overline{Y}),
$$
since $H^{s}(\R^2)\subset H^{s-1}(\R^2)$ for all $s>0$. Lastly, we recall that $u_2$ has the same 
type of structure with $u_2^\perp$ given by \eqref{u2p}. Similarly, as before it then follows that 
$$
u_{2}(t) \in  H^{S-3}(\mathbb{R}^{2}) \times C_{\mathbf{K}_{*}}^{\infty}(\overline{Y}).
$$
In summary, this yields \eqref{esti} which implies
$
u_{n}\Big(t,\cdot,\frac{\cdot}{\varepsilon}\Big) \in H_{\varepsilon}^{S-3}\big(\mathbb{R}^{2})$ for $n=0,1,2$. Hence it follows that for any $s>1$ 
\[
u_{n}\Big(t,\cdot,\frac{\cdot}{\varepsilon}\Big) \in H_{\varepsilon}^{s}\big(\mathbb{R}^{2}),\quad n=0,1,2,
\]
whenever $S-3>s$ which gives the condition stated above. Moreover it follows that 
\[
\|\Psi_{\rm app}(t,\cdot) \|_{H^{s}_{\varepsilon}} \le \sum \limits_{n=0}^{2}\varepsilon^{n}\|u_{n} \|_{H^{s}_{\varepsilon}} \le \sum_{n=0}^{2}\|u_{n} \|_{H^{S-3}_{\varepsilon}} = C_{b}.
\] 
Having established this, our next step is to prove the first inequality of the lemma. It suffices to show that there exists a constant $C_{0}>0$ such that 
\begin{equation}\label{est1}
\Big\|(\varepsilon\partial)^{\gamma}u_{0}\Big(t,\cdot,\frac{\cdot}{\varepsilon}\Big) \Big\|_{L^{\infty}} \le C_{0}
\end{equation}
holds for all $\varepsilon \in (0,1),|\gamma| < s-1$, and $t \in [0,T)$.
Since $u_{0}$ is of the form \eqref{u0}, we need only to show that
\[\Big\|(\varepsilon\partial)^{\gamma}\Big(\alpha_{1,2}(t,\cdot)\Phi_{1,2}\Big(\frac{\cdot}{\eps};\textbf{K}_{\ast}\Big)\Big) \Big\|_{L^{\infty}} \le C,
\]
where $C$ is a constant independent of epsilon.
By the Leibniz rule one has  
\[(\varepsilon\partial)^{\gamma}\Big(\alpha_{1,2}(t,\mathbf{x})\Phi_{1,2}\Big(\frac{\mathbf{x}}{\eps};\textbf{K}_{\ast}\Big)\Big) = \sum\limits_{\mathclap{ |\sigma| \le |\gamma|}} \begin{pmatrix} |\gamma| \\ \sigma\end{pmatrix} (\eps \partial)^{\sigma}\alpha_{1,2}(t,\mathbf{x}) \cdot (\eps \partial)^{\sigma-\gamma}\Phi_{1,2}\Big(\frac{\mathbf{x}}{\eps};\textbf{K}_{\ast}\Big),
\] 
and we can thus estimate 
\begin{align*} 
&\Big\|(\varepsilon\partial)^{\gamma}\Big(\alpha_{1,2}(t,\cdot)\Phi_{1,2}\Big(\frac{\cdot}{\eps};\textbf{K}_{\ast}\Big)\Big) \Big\|_{L^{\infty}} \\
&\le C_{1}\sum\limits_{\mathclap{ |\sigma| \le |\gamma|}} \Big\| (\eps \partial)^{\sigma}\alpha_{1,2}(t,\cdot) \Big\|_{L^{\infty}}  \Big\| (\eps \partial)^{\sigma-\gamma}\Phi_{1,2}\Big(\frac{\cdot}{\eps};\textbf{K}_{\ast}\Big) \Big\|_{L^{\infty}} \\
&\le C_{2}\Big\| \Phi_{1,2}\Big(\cdot;\textbf{K}_{\ast}\Big) \Big\|_{C_{\mathbf{K}_{\ast}}^{|\gamma|}} \, \sum\limits_{\mathclap{ |\sigma| \le |\gamma|}} \Big\|  \partial^{\sigma}\alpha_{1,2}(t,\cdot) \Big\|_{L^{\infty}}   \\
&\le C_{3} \Big\| \alpha_{1,2}(t,\cdot) \Big\|_{H^{2+|\gamma|}} \le C,
\end{align*}
where the second to last inequality follows by Sobolev embedding.  Hence \eqref{est1} follows by the triangle inequality. 
For $n=1,2$, one invokes \eqref{GN} directly to obtain
\[ \| (\eps\partial)^{\gamma}u_{n}\Big(t,\cdot,\frac{\cdot}{\eps}  \Big) \|_{L^{\infty}} \le C_{\infty}\eps^{-1}\| u_{n} \|_{H^{m +|\gamma|}_{\eps}} \le \eps^{-1} \| u_{n} \|_{H^{s}} \le \eps^{-1}C_{n},
\]
provided that $m>1$ and $m+|\gamma| \le s $, which implies $|\gamma| < s-1$. The desired inequality then follows, since
\[\|(\varepsilon\partial)^{\gamma}\Psi^\eps_{\rm app}(t,\cdot) \|_{L^{\infty}} \le C_{0} + \sum_{n=1}^{2}\eps^{n-1}\Big \| (\eps\partial)^{\gamma}u_{n}\Big(t,\cdot,\frac{\cdot}{\eps}  \Big) \Big\|_{L^{\infty}} \le C_{a},
\]
where $C_{a}$ is a constant independent of epsilon.
One notices that the last inequality follows from the fact that for $n=3,\dots, 7$: $X_{n} \in H_{\varepsilon}^{s}(\mathbb{R}^{2})$ and so 
\begin{equation*}
\|\rho(\Psi_{\rm app})(t,\cdot)  \|_{H^{s}_{\varepsilon}} \le \sum \limits_{n=3}^{7}\varepsilon^{n}\| X_{n} \|_{H^{s}_{\varepsilon}} \le C_{r}\varepsilon^{3}.
\end{equation*}
\end{proof}

With these estimates at hand, we can now prove the stability of our approximation. 


\section{Nonlinear stability of the approximation} \label{sec:stab}

Before we prove the nonlinear stability of our approximation, we recall that it was proved in \cite{GMS}, that the linear Schrödinger group 
\begin{equation}\label{group}
S^{\varepsilon}(t) :=e^{-iH^{\varepsilon}t/\varepsilon}
\end{equation}
generated by the periodic Hamiltonian $H^\eps$ defined in \eqref{ham} satisfies
\begin{equation}\label{propest}
\| S^{\varepsilon}(t)f \|_{H^{s}_{\varepsilon}} \le C_{s}\|f \|_{H^{s}_{\varepsilon}} \ \mbox{for all} \ t\in\mathbb{R} \ \mbox{and all} \ \varepsilon \in (0,1),
\end{equation}
where $C_s>0$ is independent of $\eps$. 
To this end, one uses the fact that $V_{\rm per}$ is assumed to be smooth and periodic (as guaranteed by Assumption \ref{ass}). Using this, it is straightforward to 
obtain the following basic well-posedness result for nonlinear Schr\"odinger equations.

\begin{lemma} \label{lem:exact} 
Let $V_{\rm per}$ satisfy Assumption \ref{ass} and $\Psi^{\varepsilon}_0 \in H_{\varepsilon}^{s}(\mathbb{R}^{2})$, for $s>1$. Then there exists a 
$T^\eps>0$ and a unique solution $ \Psi^{\varepsilon} \in C([0, T^{\varepsilon});H_{\varepsilon}^{s}(\mathbb{R}^{2}))$ to the initial value problem \eqref{NLS}. 
\end{lemma}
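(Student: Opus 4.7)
The plan is a standard Banach fixed-point argument applied to the Duhamel formulation of \eqref{NLS}. Dividing the equation by $\eps$ and isolating the linear semigroup $S^\eps(t)$ defined in \eqref{group}, I rewrite \eqref{NLS} as the integral equation
\begin{equation*}
\Psi^\eps(t) = S^\eps(t)\Psi^\eps_0 - i\kappa\int_0^t S^\eps(t-\tau)\,|\Psi^\eps(\tau)|^{2}\Psi^\eps(\tau)\,d\tau =: (\Phi\,\Psi^\eps)(t),
\end{equation*}
and seek a unique fixed point of $\Phi$ in a closed ball of $Y^\eps_T := C([0,T];H^s_\eps(\R^2))$ endowed with $\sup_{t\in[0,T]}\|\cdot(t)\|_{H^s_\eps}$, for a sufficiently small (possibly $\eps$-dependent) time $T=T^\eps>0$.

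The linear term is immediately controlled by the uniform bound \eqref{propest}, which gives $\|S^\eps(t)\Psi^\eps_0\|_{H^s_\eps}\le C_s\|\Psi^\eps_0\|_{H^s_\eps}$. For the cubic term, the key ingredient is a tame algebra estimate on $H^s_\eps(\R^2)$: a rescaling $\tilde f(y)=f(\eps y)$, combined with the standard algebra property of $H^s(\R^2)$ for $s>1$, shows that
\begin{equation*}
\|fg\|_{H^s_\eps} \le C_s\,\eps^{-1}\|f\|_{H^s_\eps}\|g\|_{H^s_\eps},
\end{equation*}
and hence $\||\Psi|^2\Psi\|_{H^s_\eps}\le \tilde C_s\,\eps^{-2}\|\Psi\|^3_{H^s_\eps}$ (complex conjugation preserves the $H^s_\eps$-norm), together with the analogous cubic Lipschitz bound for $|\Psi|^2\Psi-|\tilde\Psi|^2\tilde\Psi$. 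Combining these estimates with \eqref{propest}, I verify that $\Phi$ maps the ball $B_R=\{\Psi\in Y^\eps_T:\sup_t\|\Psi(t)\|_{H^s_\eps}\le R\}$ into itself and is a strict contraction on it, provided $R\sim 2C_s\|\Psi^\eps_0\|_{H^s_\eps}$ and $T^\eps$ is chosen small enough (depending on $\eps$, $\kappa$, and $\|\Psi^\eps_0\|_{H^s_\eps}$). Banach's theorem then produces a unique local solution $\Psi^\eps\in C([0,T^\eps);H^s_\eps(\R^2))$; a standard continuation argument extends it to a maximal interval with the usual blow-up alternative.

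The only real subtlety is tracking the $\eps$-dependence. Because the algebra constant on $H^s_\eps(\R^2)$ degenerates like $\eps^{-1}$, the life-span $T^\eps$ produced by this direct contraction argument shrinks as $\eps\to 0$, and the lemma only asserts existence on such an $\eps$-dependent interval. Establishing a uniform-in-$\eps$ existence time of order $\mathcal O(1)$ is out of reach at this abstract level and is precisely what the nonlinear stability analysis of Section \ref{sec:stab} achieves a posteriori, using the refined structure of the approximate solution $\Psi^\eps_{\rm app}$ constructed in Section \ref{sec:multi} and the estimates of Proposition \ref{prop:est}.
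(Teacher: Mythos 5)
Your proposal is correct and follows essentially the same route as the paper, which simply invokes the propagation estimate \eqref{propest} and cites the standard Duhamel/fixed-point argument of \cite[Proposition 3.8]{Tao}. You spell out the details the paper leaves implicit — in particular the $\eps$-dependent algebra constant on $H^s_\eps$ and the resulting $\eps$-dependent life-span $T^\eps$ — which matches the paper's own remark that this lemma alone does not rule out $T^\eps\to 0_+$.
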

\begin{proof} In view of \eqref{propest}, the proof is a straightforward extension of the one given in, e.g., \cite[Proposition 3.8]{Tao} for the case without periodic potential. \end{proof}

\begin{remark} Note that this result does not preclude the possibility that $T^\eps\to 0_+$, as $\eps \to 0_+$. 
However, it will be a by-product of our main theorem below that this is not the case (at least not for the class of initial data considered in this work).\end{remark}

As a final preparatory step, we recall the following Moser-type lemma proved in \cite[Lemma 8.1.1]{Ra}.
\begin{lemma}\label{lem:moser}
Let $R>0$, $s \in [0,\infty)$, and $\mathcal{N}(z)=\kappa|z|^{2}z$ with $\kappa \in \mathbb{R}$.  Then there exists a $C_{s} = C_{s}(R,s,d,\kappa)$ such that if $f$ satisfies 
$$
\|(\varepsilon\partial)^{\gamma}f\|_{L^{\infty}} \le R \ \ \forall |\gamma|\le s, 
$$
and $g$ satisfies
$$
\|g\|_{L^{\infty}} \le R,
$$
then 
$$
\| \mathcal{N}(f+g) - \mathcal{N}(f) \|_{H^{s}_{\varepsilon}} \le C_{s}\| g \|_{H^{s}_{\varepsilon}}
$$
\end{lemma}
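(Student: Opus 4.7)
The plan is to exploit the fact that $\mathcal N(z) = \kappa |z|^2 z$ is a polynomial of degree three in $(z,\bar z)$, so that the algebraic expansion
\[
\mathcal N(f+g) - \mathcal N(f) = \kappa \bigl(f^2 \bar g + 2|f|^2 g + 2 f |g|^2 + \bar f g^2 + g|g|^2 \bigr)
\]
writes the difference as a finite sum of monomials $M = f^{a}\bar f^{b} g^{c}\bar g^{d}$ of total degree $a+b+c+d=3$, each containing at least one factor of $g$ or $\bar g$, i.e., $c+d \ge 1$. It therefore suffices to bound each such monomial in $H^{s}_{\eps}$ by a constant (depending on $R$) times $\|g\|_{H^{s}_{\eps}}$.

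For fixed $M$, I would apply the generalized Leibniz rule to $(\eps\partial)^{\gamma} M$ for $|\gamma|\le s$, obtaining a finite sum of products of three factors of the form $(\eps\partial)^{\alpha_i} f$ or $(\eps\partial)^{\beta_j} g$ (and their complex conjugates) with multi-indices summing to $\gamma$. By the hypothesis on $f$, every derivative $(\eps\partial)^{\alpha_i} f$ is pointwise bounded by $R$, so all $f$-factors may be pulled out in $L^{\infty}$. One is then left with the $L^{2}$-norm of a product of $c+d \in \{1,2,3\}$ derivatives of $g$ or $\bar g$ of total order $\le s$.

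This remaining product is handled via H\"older combined with the Gagliardo-Nirenberg interpolation
\[
\|(\eps\partial)^{\beta_j} g\|_{L^{p_j}} \le C\,\|g\|_{L^{\infty}}^{1-|\beta_j|/s}\,\|g\|_{H^{s}_{\eps}}^{|\beta_j|/s},
\]
which in the $\eps$-scaled spaces follows from the ordinary Gagliardo-Nirenberg inequality through the change of variables $\tilde{\mathbf x} = \mathbf x/\eps$. Choosing the exponents so that the Gagliardo-Nirenberg interpolation weights $|\beta_j|/s$ sum to one and the conjugate H\"older exponents sum to $1/2$ yields a bound of the form $C(R)\,\|g\|_{H^{s}_{\eps}}$, the remaining factors of $\|g\|_{L^{\infty}}\le R$ contributing only additional powers of $R$. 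Summing over the finite number of multi-index splittings and over the five monomials produces the claim with $C_{s} = C_{s}(R,s,d,\kappa)$.

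The main subtlety, and the reason one cannot simply iterate the standard tame product estimate $\|uv\|_{H^{s}_{\eps}} \lesssim \|u\|_{L^{\infty}}\|v\|_{H^{s}_{\eps}} + \|v\|_{L^{\infty}}\|u\|_{H^{s}_{\eps}}$ on the whole monomial $M$, is that $f$ need not lie in $H^{s}_{\eps}$ at all --- only its pointwise $\eps$-derivatives are controlled. The hypothesis $\|(\eps\partial)^{\gamma} f\|_{L^{\infty}} \le R$ for all $|\gamma| \le s$ is exactly what lets one by-pass this: the $\gamma$-derivatives are distributed first by Leibniz so that every $f$-factor is put into $L^{\infty}$, leaving interpolation to be performed on $g$ alone, where both $\|g\|_{L^{\infty}}$ and $\|g\|_{H^{s}_{\eps}}$ are finite by assumption.
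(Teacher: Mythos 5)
The paper does not prove this lemma at all: it simply cites Rauch \cite[Lemma 8.1.1]{Ra}. Your proposal reconstructs the standard Moser-type argument behind that reference (algebraic expansion, Leibniz, put all $f$-factors into $L^\infty$, interpolate on the $g$-factors), and you correctly identify why one must argue this way rather than via the tame product estimate: $f$ is only assumed to have bounded $\eps$-derivatives, not to lie in $H^s_\eps$. So the approach is sound.

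One technical slip worth fixing: the interpolation inequality as you wrote it, $\|(\eps\partial)^{\beta_j}g\|_{L^{p_j}}\le C\|g\|_{L^\infty}^{1-|\beta_j|/s}\|g\|_{H^s_\eps}^{|\beta_j|/s}$ with $p_j=2s/|\beta_j|$, gives H\"older exponents summing to $\sum|\beta_j|/(2s)$, which equals $1/2$ only when all of $\gamma$ lands on $g$ and $|\gamma|=s$. In the generic Leibniz term (some derivatives on $f$, or $|\gamma|<s$) the exponents sum to strictly less than $1/2$ and H\"older does not close. The standard fix is to interpolate with respect to $m:=\sum|\beta_j|$ — the total derivative order landing on $g$ — rather than $s$: for $m\ge 1$ one has $\|(\eps\partial)^{\beta_j}g\|_{L^{2m/|\beta_j|}}\le C\|g\|_{H^m_\eps}^{|\beta_j|/m}\|g\|_{L^\infty}^{1-|\beta_j|/m}$, the H\"older exponents now sum to exactly $1/2$, the $H^m_\eps$-exponents sum to $1$, and one bounds $\|g\|_{H^m_\eps}\le\|g\|_{H^s_\eps}$ at the end; the remaining $L^\infty$ powers of $g$ are absorbed into $R$. (Equivalently, interpolate the zero-derivative factors between $L^2$ and $L^\infty$ to make up the deficit in the conjugate exponents.) With that adjustment the argument is complete.
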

In contrast to other estimates (e.g., Schauder etc.), the above result has the advantage to result in a linear bound on the nonlinearity, a fact we shall use in the proof below. Indeed, we 
are now in the position to prove our main result.
\begin{theorem}
Let $V_{\rm per}$ satisfy Assumption \ref{ass}, let $\mbox{\boldmath $\alpha$} \in C([0,T),H^{S}(\mathbb{R}^{2}))^{2} $ be a solution to \eqref{NLS} and 
$\mbox{\boldmath $\beta$} \in  C([0,T),H^{S-2}(\mathbb{R}^{2}))^{2}$ be a solution to \eqref{linD} for some $S>s+3$ with $s >1$. 
Finally, assume that there is a $c>0$ such that the initial data $\Psi_0^\eps$ of \eqref{NLS} satisfies
$$
\Big \|\Psi^{\varepsilon}_0 - \big(u_{0} +\varepsilon u_{1} \big)\left(0,\cdot,  \frac{\cdot}{\eps}\right) \Big \|_{H^{s}_{\varepsilon}} \le c\varepsilon^{2}.
$$

Then, for any $T_{\ast} \in [0,T )$ there exists an $\varepsilon_{0}= \varepsilon_{0}(T_{\ast}) \in (0,1)$ and a constant 
$C>0$, such that for all $\varepsilon \in (0,\varepsilon_{0})$, the solution $\Psi^{\varepsilon} \in C([0, T_\ast);H_{\varepsilon}^{s}(\mathbb{R}^{2}))$ of \eqref{NLS} exists and moreover
$$
\sup_{0 \le t \le T_\ast} \Big \|\Psi^{\varepsilon}(t,\cdot) - e^{-i\mu_\ast t /\eps} \big(u_{0} +\varepsilon u_{1} \big)\left(t,\cdot,  \frac{\cdot}{\eps}\right) \Big \|_{H^{s}_{\varepsilon}} \le C\varepsilon^{2}.
$$
\end{theorem}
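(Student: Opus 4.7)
\bigskip

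\textbf{Proof plan.} The strategy is the standard nonlinear WKB stability scheme adapted from \cite{GMS}: compare the exact solution against the fully corrected approximation $\Psi^\varepsilon_{\rm app}$ in \eqref{approxsol} (which includes the $\varepsilon^2 u_2$ term), show the difference is $O(\varepsilon^2)$ in $H^s_\varepsilon$, and then absorb $\varepsilon^2 u_2$ into the error. Set $w^\varepsilon := \Psi^\varepsilon - \Psi^\varepsilon_{\rm app}$. Since $X_n=0$ for $n=0,1,2$ by construction, $\Psi^\varepsilon_{\rm app}$ solves \eqref{NLS} up to the residual $\rho(\Psi^\varepsilon_{\rm app})$, so $w^\varepsilon$ satisfies
\begin{equation*}
i\varepsilon\partial_t w^\varepsilon = H^\varepsilon w^\varepsilon + \varepsilon\kappa\bigl(|\Psi^\varepsilon|^2\Psi^\varepsilon - |\Psi^\varepsilon_{\rm app}|^2\Psi^\varepsilon_{\rm app}\bigr) - \rho(\Psi^\varepsilon_{\rm app}),\qquad w^\varepsilon(0) = \Psi^\varepsilon_0 - \Psi^\varepsilon_{\rm app}(0).
\end{equation*}
By the hypothesis on the initial data and the fact that $\|\varepsilon^2 u_2(0,\cdot,\cdot/\varepsilon)\|_{H^s_\varepsilon}=O(\varepsilon^2)$ from Proposition \ref{prop:est}, we have $\|w^\varepsilon(0)\|_{H^s_\varepsilon}\le C_0\varepsilon^2$.

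Next, I would pass to Duhamel form using the propagator $S^\varepsilon(t)$ from \eqref{group}:
\begin{equation*}
w^\varepsilon(t) = S^\varepsilon(t)w^\varepsilon(0) - i\kappa\!\int_0^t\! S^\varepsilon(t-\tau)\bigl(|\Psi^\varepsilon|^2\Psi^\varepsilon - |\Psi^\varepsilon_{\rm app}|^2\Psi^\varepsilon_{\rm app}\bigr)(\tau)\,d\tau + \frac{i}{\varepsilon}\!\int_0^t\! S^\varepsilon(t-\tau)\rho(\Psi^\varepsilon_{\rm app})(\tau)\,d\tau.
\end{equation*}
The propagator estimate \eqref{propest} makes $S^\varepsilon(t)$ uniformly bounded on $H^s_\varepsilon$, so the residual term contributes at most $\varepsilon^{-1}\cdot T_\ast\cdot C_r\varepsilon^3 = O(\varepsilon^2)$. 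To control the nonlinear difference, I write $\Psi^\varepsilon = \Psi^\varepsilon_{\rm app}+w^\varepsilon$ and invoke the Moser estimate, Lemma \ref{lem:moser}, with $f = \Psi^\varepsilon_{\rm app}$ and $g = w^\varepsilon$: Proposition \ref{prop:est} supplies the uniform bound $\|(\varepsilon\partial)^\gamma \Psi^\varepsilon_{\rm app}\|_{L^\infty}\le C_a$, and on the set where $\|w^\varepsilon\|_{L^\infty}\le R$ one obtains
\begin{equation*}
\bigl\||\Psi^\varepsilon|^2\Psi^\varepsilon - |\Psi^\varepsilon_{\rm app}|^2\Psi^\varepsilon_{\rm app}\bigr\|_{H^s_\varepsilon} \le C_s\|w^\varepsilon\|_{H^s_\varepsilon}.
\end{equation*}

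The main difficulty — and the reason this scaling is critical — is that $\Psi^\varepsilon$ is $O(1)$ in $L^\infty$, so the $L^\infty$ bound on $w^\varepsilon$ required by Moser is not free; moreover Lemma \ref{lem:exact} only gives local existence on $[0,T^\varepsilon)$ with $T^\varepsilon$ possibly vanishing. I would resolve both issues with a single continuity/bootstrap argument. Fix $K>0$ and let
\begin{equation*}
T^\varepsilon_\star := \sup\bigl\{t\in[0,\min(T^\varepsilon,T_\ast)) : \|w^\varepsilon(\tau)\|_{H^s_\varepsilon}\le K\varepsilon\ \text{for all } \tau\in[0,t]\bigr\}.
\end{equation*}
The Gagliardo-Nirenberg inequality \eqref{GN} yields $\|w^\varepsilon(\tau)\|_{L^\infty}\le C_\infty K$ on $[0,T^\varepsilon_\star]$, so Moser applies with $R = C_a + C_\infty K$. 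Combining the Duhamel bound with Gronwall's lemma then gives
\begin{equation*}
\|w^\varepsilon(t)\|_{H^s_\varepsilon} \le \bigl(C_0 + T_\ast C_r\bigr)\varepsilon^2\, e^{C_s T_\ast} =: C_1\varepsilon^2, \qquad t\in[0,T^\varepsilon_\star].
\end{equation*}
Choosing $K = 2C_1$ and then $\varepsilon_0$ small enough that $C_1\varepsilon_0 < K/2$ forces $\|w^\varepsilon(t)\|_{H^s_\varepsilon}\le K\varepsilon/2$ strictly on the closure of $[0,T^\varepsilon_\star]$, so by continuity $T^\varepsilon_\star$ cannot be less than $\min(T^\varepsilon,T_\ast)$. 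The blow-up alternative in Lemma \ref{lem:exact} combined with the a priori bound $\|\Psi^\varepsilon\|_{H^s_\varepsilon}\le \|\Psi^\varepsilon_{\rm app}\|_{H^s_\varepsilon}+K\varepsilon\le C_b+K$ then precludes $T^\varepsilon<T_\ast$, so in fact $T^\varepsilon_\star \ge T_\ast$. Finally, replacing $\Psi^\varepsilon_{\rm app}$ by $e^{-i\mu_\ast t/\varepsilon}(u_0+\varepsilon u_1)(t,\cdot,\cdot/\varepsilon)$ costs an extra $\|\varepsilon^2 u_2(t,\cdot,\cdot/\varepsilon)\|_{H^s_\varepsilon} = O(\varepsilon^2)$ by Proposition \ref{prop:est}, which completes the proof.
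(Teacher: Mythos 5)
Your proposal follows essentially the same route as the paper (Duhamel for $w^\varepsilon = \Psi^\varepsilon - \Psi^\varepsilon_{\rm app}$, propagator bound \eqref{propest}, Moser's Lemma \ref{lem:moser}, Gronwall, and a continuity/bootstrap to close and simultaneously extend $T^\varepsilon$ past $T_\ast$), and the final triangle-inequality step to drop $\varepsilon^2 u_2$ matches as well. However, there is one logical slip worth flagging: your bootstrap threshold is $\|w^\varepsilon\|_{H^s_\varepsilon}\le K\varepsilon$, so Gagliardo--Nirenberg only yields $\|w^\varepsilon\|_{L^\infty}\le C_\infty K$, an $O(1)$ quantity. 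This forces you to take $R = C_a + C_\infty K$ in Moser, so that $C_s = C_s(R)$ depends on $K$, hence $C_1 = (C_0 + T_\ast C_r)e^{C_s T_\ast}$ depends on $K$; the subsequent instruction ``choose $K = 2C_1$'' is then a circular definition (and since $C_s$ typically grows with $R$, the implicit fixed-point equation $K = 2C_1(K)$ need not even have a solution).

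The paper avoids this by taking the bootstrap threshold to be $M\varepsilon^2$ (not $M\varepsilon$). Then Gagliardo--Nirenberg gives $\|\varphi^\varepsilon\|_{L^\infty}\le \varepsilon C_\infty M$, which is $o(1)$, so for $\varepsilon$ below some $\varepsilon_0(M)$ one has $\|\varphi^\varepsilon\|_{L^\infty}\le C_a$. Moser then applies with $R = C_a$, fixed once and for all, so the Gronwall constant $\tilde C$ is independent of $M$ and one may legitimately pick $M > \max\{\tilde c,\tilde C\}$ afterward. Your argument can be repaired either by adopting the $\varepsilon^2$-scaling in the bootstrap set, or by fixing $K$ to an arbitrary constant (say $K=1$) \emph{before} defining $R$, $C_s$, and $C_1$, and then shrinking $\varepsilon_0$ so that $C_1\varepsilon_0^2 < K\varepsilon_0/2$; but as written the quantifiers are in the wrong order. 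Everything else — the handling of $w^\varepsilon(0)$, the $\varepsilon^{-1}\cdot T_\ast\cdot C_r\varepsilon^3$ bound on the residual, and the use of the blow-up alternative to obtain $T^\varepsilon\ge T_\ast$ — is correct and aligned with the paper.
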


\begin{proof} Consider the difference in the exact and approximate solution 
$$
\varphi^{\varepsilon} = \Psi^{\varepsilon}- \Psi^\eps_{\rm app},
$$
then $\varphi^\eps(t, {\bf x})$ satisfies
\begin{align*}
i\partial_{t}\varphi^{\varepsilon} = \frac{1}{\varepsilon}H^{\varepsilon}\varphi^{\varepsilon} + \Big(\mathcal{N}(\Psi^\eps_{\rm app} + \varphi^{\varepsilon}) - 
\mathcal{N}(\Psi^\eps_{\rm app})\Big) - \frac{1}{\varepsilon}\rho(\Psi^\eps_{\rm app}) , \quad 
\varphi^{\varepsilon}_{\mid t=0} =\varphi_0^\eps,
\end{align*}
where $H^{\varepsilon}$ is defined in \eqref{ham}, the nonlinearity is $\mathcal{N}(z)=\kappa|z|^{2}z$, and $\rho(\Psi^\eps_{\rm app})$ is the remainder obtained in \eqref{remainder} for $N=2$. We denote 
$w^{\varepsilon}(t) := \|\varphi^{\varepsilon}(t) \|_{H^{s}_{\varepsilon}} $ and first note by assumption that 
$$
w^{\varepsilon}(0)\equiv \| \varphi^\eps_0\|_{H^s_\eps} 
\le \Big \|\Psi^{\varepsilon}(0,\cdot) - \big(u_{0} +\varepsilon u_{1} \big)\left(0,\cdot,  \frac{\cdot}{\eps}\right)\Big \|_{H^{s}_{\varepsilon}} + \varepsilon^{2}\| u^{\perp}_{2}(0,\cdot) \|_{H^{s}_{\varepsilon}} \le \tilde{c}\varepsilon^{2},
$$ 
for some constant $\tilde{c} \in \mathbb{R}$. We shall 
prove that there exists $\tilde{C}>0$, $\varepsilon_{0} \in (0,1)$, such that for all $\varepsilon \in (0,\varepsilon_{0}] $ that $w^{\varepsilon}(t) \le \tilde{C}\varepsilon^{2}$ for $t\in [0,T_\ast]$.

To this end, we rewrite the equation above using Duhamel's principle, i.e.
\begin{align*}
\varphi^\eps (t) = & \, S^{\varepsilon}(t)\varphi^{\varepsilon}_{0}  
 - i\int \limits_{0}^{t}S^{\varepsilon}(t-\tau)\big(\mathcal{N}\big(\Psi^\eps_{\rm app}(\tau) + \varphi^{\varepsilon}(\tau)) - \mathcal{N}(\Psi^\eps_{\rm app}(\tau)\big)\big)\;d\tau \\
 & \, + \frac{i}{\varepsilon}\int \limits_{0}^{t}S^{\varepsilon}(t-\tau)\rho(\Psi^\eps_{\rm app}(\tau))\;d\tau,
\end{align*}
where $S^\eps(t)$ is the Schr\"odinger group \eqref{group}. 
Now, using the propagation estimate \eqref{propest}, together with the estimate on the remainder stated in Proposition \ref{prop:est}, we obtain 
$$
w^{\varepsilon}(t) \le C_{l}(\tilde{c} + C_{r}T_\ast)\varepsilon^{2} + C_{l}\int \limits_{0}^{t}\big\|\mathcal{N}\big(\Psi^\eps_{\rm app}(\tau) + 
\varphi^{\varepsilon}(\tau)\big) - \mathcal{N}\big(\Psi^\eps_{\rm app}(\tau)\big)\big\|_{H^{s}_{\varepsilon}}\;d\tau.
$$
for all $t\le T_\ast$.

Set $\tilde{C}:=C_{l}(\tilde{c} + C_{r}T_\ast)e^{C_{l}C_{s}T_\ast}$ and choose $M>\max\{\tilde{c},\tilde{C}\}$ such that $M\eps^{2} > \tilde{c}\varepsilon^{2} \ge w^{\varepsilon}(0)$. 
Since $w^{\varepsilon}(t)$ is continuous in time, there exists, for every $\varepsilon \in (0,1)$, a positive time $t_{M}^{\varepsilon}>0$, such that $w^{\varepsilon}(t) \le \varepsilon^{2} M$ for $t \le t_{M}^{\varepsilon}$.
The Gagliardo-Nirenberg inequality \eqref{GN} then yields for $s>1$
$$
\|\varphi^{\varepsilon}(t)\|_{L^{\infty}} \le \varepsilon^{-1}C_{\infty}\|  \varphi^{\varepsilon}(t) \|_{H^{s}_{\varepsilon}} = \eps^{-1} C_{\infty} w^{\varepsilon}(t) \le \eps C_{\infty}M
$$
for $t \le t_{M}^{\varepsilon}$. Hence there exists an $\varepsilon_{0} \in (0,1)$, such that 
$$
\|\varphi^{\varepsilon}(t)\|_{L^{\infty}} \le C_{a},
$$
for $\varepsilon \in (0,\varepsilon_{0}]$ and $t\le t_{M}^{\varepsilon}$. By Proposition \ref{prop:est} we also have 
$$ \|(\varepsilon\partial)^{\gamma}\Psi_{\rm app}(t,\cdot) \|_{L^{\infty}} \le C_{a} $$ 
for $|\gamma|\le s-1, \varepsilon \in (0,1),$ and $t<T$. Thus we are in a position to apply the Moser-type Lemma \ref{lem:moser} 
with $R=C_{a}$ to obtain
$$
w^{\varepsilon}(t) \le C_{l}(c + C_{r}T_\ast)\varepsilon^{2} + C_{l}C_{s}\int \limits_{0}^{t}w^{\varepsilon}(\tau)\;d\tau, \ \mbox{for} \ \varepsilon \in (0,\varepsilon_{0}] \ \mbox{and} \ t \le t_{M}^{\varepsilon}.
$$
Gronwall's lemma then yields
$$
w^{\varepsilon}(t) \le C_{l}(\tilde{c} + C_{r}T_\ast)\varepsilon^{2}e^{C_{l}C_{s}t} \le \tilde{C}\varepsilon^{2} < M \ \mbox{for all} \ \varepsilon \in (0,\varepsilon_{0}] \ \mbox{and} \ t \le T_\ast.
$$
Hence for any choice of $T_\ast$, by continuity, we may extend further in time in that $t_{M}^{\varepsilon} \geq T_\ast$, and so we have proved that for any $T_\ast < T$:
$$
\sup_{0 \le t \le T_\ast} w^{\varepsilon}(t) \equiv \sup_{0 \le t \le T_\ast}\|\Psi^{\varepsilon}(t,\cdot) - \Psi^\eps_{\rm app}(t,\cdot) \|_{H^{s}_{\varepsilon}}  \le \tilde{C}\varepsilon^{2}.
$$
Another triangle inequality, then yields the desired result. 
\end{proof}

This theorem implies the approximation result announced in \eqref{approx}, uniformly on finite time-intervals bounded by the local existence time of \eqref{NLD}, 
provided the initial data is sufficiently well prepared, i.e. up to errors of order $\mathcal O(\eps^2)$.

\begin{remark}\label{rem} 
Unfortunately, our proof requires us to work with an approximate solution of order $\mathcal O(\eps^2)$, in which we need to control also the first order corrector $\propto u_1$. 
In turn, this requires the initial data to be somewhat better than one would like it to be.
The reason for this is the scaling factor $\eps^{-1}$ appearing in the 
Gagliardo-Nirenberg inequality \eqref{GN}. 
It is conceivable that this is merely a technical issue which 
can be overcome by the use of a different functional framework. For example, the authors in \cite{DoHe, DoUe} work in a Wiener-type 
space for the Bloch-transformed solution and its approximation.
\end{remark}


\section{The case of Hartree nonlinearities}\label{sec:hartree}

In this section, we shall briefly discuss the case of NLS with Hartree nonlinearity, i.e., instead of \eqref{NLS}, we consider
\begin{equation}\label{hartree}
i\varepsilon\partial_{t}\Psi^{\varepsilon}   = - \varepsilon^{2}\Delta \Psi^{\varepsilon}  + V_{\rm per}\Big(\frac{\mathbf{x}}{\varepsilon}\Big)\Psi^{\varepsilon} + 
\varepsilon \kappa\left(\frac{1}{|\cdot|} \ast |\Psi^{\varepsilon}|^{2}\right) \Psi^{\varepsilon}, \quad \Psi^\eps_{\mid t=0} = \Psi^\eps_0({\bf x}).
\end{equation}
This model describes the (semi-classical) dynamics of electrons inside a graphene layer, under the influence of a self-consistent electric field. 

We again seek an asymptotic expansion of the form
\[
\Psi^{\varepsilon}(t,\mathbf{x})  \Eq \eps 0 e^{-it\mu_{\ast} / \varepsilon}\sum \limits_{n=0}^{2}\varepsilon^{n}u_{n}\Big( t, \mathbf{x},\frac{\mathbf{x}}{\varepsilon}\Big),
\]
where as before the $u_n$ are assumed to be ${\bf k}-$pseudo periodic with respect to the fast variable. By the same procedure as in Section \ref{sec:multi}, we obtain that
\[
u_{0}(t,\mathbf{x},\mathbf{y})=\sum \limits_{j=1}^{2}\alpha_{j}(t,\mathbf{x})\Phi_{j}(\mathbf{y};\textbf{K}_{\ast}).
\]
Plugging this into the Hartree nonlinearity, yields the following nonlinear potential
$$
V^{\varepsilon} (t, {\bf x})= \frac{1}{|\cdot|}\ast  \sum \limits_{j,k=1}^{2} \big(\alpha_{j}\overline{\alpha}_{k}\Phi_{j}\overline{\Phi}_{k} \big)\Big(t,\cdot,\frac{\mathbf{\cdot}}{\varepsilon}\Big),
$$
which unfortunately does not directly exhibit the required two-scale structure. However, we shall prove the following averaging result.

\begin{lemma}
Let $\alpha_{j}(t,\mathbf{x})\in H^s(\R^2)$, for $s>1$, then
$$ \lim_{ \varepsilon \to 0 }V^{\varepsilon}(t,\mathbf{x}) = \Big(\frac{1}{|\cdot|}\ast \big( |\alpha_{1}|^{2} + |\alpha_{2}\big|^{2}\big) \Big)(t,\mathbf{x}).
$$
\end{lemma}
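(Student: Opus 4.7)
The plan is to decompose the product $\Phi_j\overline{\Phi_k}$ into its cell-mean plus a zero-mean correction, and then invoke a classical averaging principle for oscillating periodic functions against integrable test functions. Since $\Phi_j,\Phi_k \in L^2_{\mathbf K_\ast}$, the $\mathbf K_\ast$-pseudo-periodicity phases cancel in the product, so $\Phi_j\overline{\Phi_k}$ is purely $\Lambda$-periodic; its cell-mean equals $\langle \Phi_k,\Phi_j\rangle_{L^2(Y)} = \delta_{jk}$ by orthonormality of the Bloch basis. Writing
\begin{equation*}
\Phi_j(\mathbf y)\overline{\Phi_k(\mathbf y)} = \delta_{jk} + \tilde g_{jk}(\mathbf y),
\end{equation*}
with $\tilde g_{jk}$ smooth ($\Phi_{1,2}\in C^\infty(\overline Y)$ by Assumption \ref{ass}), $\Lambda$-periodic, and of zero mean, the expression for $V^\varepsilon$ splits into the desired limit plus an oscillatory remainder
\begin{equation*}
R^\varepsilon(t,\mathbf x) = \sum_{j,k=1}^2 \int_{\R^2}\frac{\alpha_j(t,\mathbf x')\overline{\alpha_k(t,\mathbf x')}}{|\mathbf x-\mathbf x'|}\,\tilde g_{jk}(\mathbf x'/\varepsilon)\,d\mathbf x',
\end{equation*}
so the task reduces to showing $R^\varepsilon(t,\mathbf x) \to 0$ pointwise.

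The averaging principle I would invoke says that if $\tilde g \in L^\infty(\R^2)$ is $\Lambda$-periodic with vanishing cell-average, then $\tilde g(\cdot/\varepsilon) \rightharpoonup^\ast 0$ in $L^\infty$; equivalently, $\int_{\R^2} \psi(\mathbf x')\,\tilde g(\mathbf x'/\varepsilon)\,d\mathbf x' \to 0$ for every $\psi \in L^1(\R^2)$. This is an immediate consequence of expanding $\tilde g$ in its Fourier series $\sum_{\mathbf m \in \Lambda^*\setminus\{0\}} c_\mathbf m e^{i\mathbf m\cdot\mathbf y}$ with rapidly decaying coefficients (by smoothness of $\tilde g_{jk}$), applying the Riemann--Lebesgue lemma $\widehat\psi(-\mathbf m/\varepsilon)\to 0$ termwise, and interchanging sum and limit by dominated convergence majorized by $|c_\mathbf m|\,\|\psi\|_{L^1}$.

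Applying this with $\psi(\mathbf x') := \alpha_j(t,\mathbf x')\overline{\alpha_k(t,\mathbf x')}/|\mathbf x - \mathbf x'|$ reduces the lemma to verifying $\psi \in L^1(\R^2)$ for every fixed $(t,\mathbf x)$. This is the main (and only) technical point: near the singularity $\mathbf x' = \mathbf x$, the local integrability of $1/|\mathbf x - \mathbf x'|$ in two dimensions, combined with the Sobolev embedding $H^s(\R^2) \hookrightarrow L^\infty$ for $s>1$ applied to $\alpha_j\overline{\alpha_k}$, yields a finite local contribution; at infinity, writing $|\alpha_j\overline{\alpha_k}| \leq \tfrac12(|\alpha_j|^2 + |\alpha_k|^2) \in L^1(\R^2)$ (since $\alpha_j \in H^s \hookrightarrow L^2$), one estimates $\int_{|\mathbf x'|\geq R}|\psi|\,d\mathbf x' \lesssim R^{-1}(\|\alpha_j\|_{L^2}^2 + \|\alpha_k\|_{L^2}^2)$ for $R > 2|\mathbf x|$. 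Thus $\psi \in L^1(\R^2)$, and the averaging principle delivers $R^\varepsilon(t,\mathbf x) \to 0$, completing the proof. The two-dimensional geometry is crucial here: in dimension two, $1/|\cdot|$ is precisely the Coulomb-type kernel that is both locally integrable and sufficiently decaying at infinity against $L^1$ densities, so $\psi$ is an admissible test function.
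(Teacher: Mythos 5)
Your argument is correct and, at its technical core, uses the same mechanism as the paper: expand the periodic factor in a Fourier series, kill the oscillating modes by Riemann--Lebesgue, and interchange limit and sum by dominated convergence (both valid since $\alpha_j\overline{\alpha_k}\in L^1\cap L^\infty$ and the kernel $1/|\cdot|$ is locally integrable in $2$D and decays at infinity). The organization differs slightly, in a way that makes your version cleaner: the paper works with the explicit Fourier series of $\Phi_1$ and $\Phi_2$ separately, expands $V^\varepsilon$ as a double sum over $\mathbf m,\mathbf m'$, and then invokes the coefficient identities $\sum_{\mathbf m}|c(\mathbf m)|^2=1$ and $\sum_{\mathbf m}c(\mathbf m)^2=0$ (the latter relying on the special Dirac-point relation $\Phi_2=\overline{\Phi_1(-\cdot)}$, read through Parseval). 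You instead observe at the outset that $\Phi_j\overline{\Phi_k}$ is purely $\Lambda$-periodic with cell-mean $\delta_{jk}$ by orthonormality of the Bloch basis, subtract that mean, and reduce to a single, general weak-$\ast$ averaging lemma for zero-mean periodic oscillators against $L^1$ densities. This encapsulates the same Parseval identities but does not require the explicit Fourier expansion or the $\Phi_2=\overline{\Phi_1(-\cdot)}$ structure, so it is marginally more general and isolates the one nontrivial check (that $\psi=\alpha_j\overline{\alpha_k}/|\mathbf x-\cdot|\in L^1$), which the paper only addresses in passing. Both proofs are valid; yours is a modest repackaging that trades explicit coefficient bookkeeping for the orthonormality observation plus a standard averaging lemma.
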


\begin{proof}
We recall that Bloch eigenfunctions concentrated at a Dirac point have a Fourier series expansion of the following form, cf. \cite{FW1}:
\begin{align*}
&\Phi_{1}(\mathbf{y},
\mathbf{K}^{\ast}) = \sum \limits_{\mathbf{m} \in \mathbb{Z}^{2}}c(\mathbf{m})e^{i\mathbf{K}^{\ast}_{\mathbf{m}}\cdot \mathbf{x}}, \\
&\Phi_{2}(\mathbf{x},
\mathbf{K}^{\ast}) = \overline{\Phi}_{1}(-\mathbf{x},
\mathbf{K}^{\ast}) = \sum \limits_{\mathbf{m} \in \mathbb{Z}^{2}}\overline{c}(\mathbf{m})e^{i\mathbf{K}^{\ast}_{\mathbf{m}}\cdot \mathbf{x}},
\end{align*}
where $$\displaystyle \mathbf{K}^{\ast}_{\mathbf{m}} = \mathbf{K}_{\ast} + m_{1}\mathbf{k}_{1} + m_{2}\mathbf{k}_{2} = \mathbf{K}_{\ast} + \mathbf{k}_{\mathbf{m}}.$$ 
By orthogonality and Parseval's identity one finds the following relations to be used below
\begin{align*}
&\sum \limits_{\mathbf{m}\in \mathbb{Z}^{2}} |c(\mathbf{m})|^{2} = \int_{Y}|\Phi_{1}(\mathbf{x})|^{2}\;d\mathbf{x} = 1, \\
& \sum \limits_{\mathbf{m}\in \mathbb{Z}^{2}} c(\mathbf{m})^{2} = \langle c(\mathbf{m}), \overline{c}(\mathbf{m}) \rangle_{\ell_{2}(\mathbb{Z}^{2})} = \int_{Y}\Phi_{1}(\mathbf{x})\overline{\Phi}_{2}(\mathbf{x})\;d\mathbf{x} = 0, \\
&\sum \limits_{\mathbf{m}\in \mathbb{Z}^{2}} \overline{c}(\mathbf{m})^{2} = \langle \overline{c}(\mathbf{m}), c(\mathbf{m}) \rangle_{\ell_{2}(\mathbb{Z}^{2})} = \int_{Y}\Phi_{2}(\mathbf{x})\overline{\Phi}_{1}(\mathbf{x})\;d\mathbf{x} = 0.
\end{align*}
Next, we decompose
\begin{align*} 
V^{\varepsilon} =  V_{1}^{\varepsilon} + V_{2}^{\varepsilon}\equiv
 \frac{1}{|\cdot|}\ast \sum \limits_{j=1}^{2} |\alpha_{j}|^2 |\Phi_{j}|^2 + \frac{1}{|\cdot|}\ast \sum \limits_{j\not =k }  \alpha_{j}\overline{\alpha}_{k}\Phi_{j}\overline{\Phi}_{k} .
\end{align*}
Explicitly, we find
\begin{align*}
V_{1}^{\varepsilon}(t,\mathbf{x}) &= \int_{\mathbb{R}^{2}} \Big(\big|\alpha_{1}(t,\mbox{\boldmath $\eta$})\big|^{2}\big|\Phi_{1}\Big(\frac{\mbox{\boldmath $\eta$}}{\varepsilon}\Big)\big|^{2} + \big|\alpha_{2}(t,\mbox{\boldmath $\eta$})\big|^{2}\big|\Phi_{2}\Big(\frac{\mbox{\boldmath $\eta$}}{\varepsilon}\Big)\big|^{2}\Big)\frac{d\mbox{\boldmath $\eta$}}{\big|\mathbf{x} - \mbox{\boldmath $\eta$}\big|} \\
&= \sum \limits_{\mathbf{m},\mathbf{m}^{\prime}\in \mathbb{Z}^{2}} c(\mathbf{m})\overline{c}(\mathbf{m}^{\prime})\int_{\mathbb{R}^{2}} \Big(\big|\alpha_{1}(t,\mbox{\boldmath $\eta$})\big|^{2} + \big|\alpha_{2}(t,\mbox{\boldmath $\eta$})\big|^{2}\Big)e^{i(\mathbf{k}_{\mathbf{m}-\mathbf{m}^{\prime}}\cdot \mbox{\boldmath $\eta$})/\varepsilon}\frac{d\mbox{\boldmath $\eta$}}{\big|\mathbf{x} - \mbox{\boldmath $\eta$}\big|},
\end{align*}
and similarly, 
\begin{align*}
&\, V_{2}^{\varepsilon}(t,\mathbf{x}) =\\
&= \int_{\mathbb{R}^{2}} \Big(\alpha_{1}(t,\mbox{\boldmath $\eta$})\overline{\alpha}_{2}(t,\mbox{\boldmath $\eta$})\Phi_{1}\Big(\frac{\mbox{\boldmath $\eta$}}{\varepsilon}\Big)\overline{\Phi}_{2}\Big(\frac{\mbox{\boldmath $\eta$}}{\varepsilon}\Big) + \overline{\alpha}_{1}(t,\mbox{\boldmath $\eta$})\alpha_{2}(t,\mbox{\boldmath $\eta$})\overline{\Phi}_{1}\Big(\frac{\mbox{\boldmath $\eta$}}{\varepsilon}\Big)\Phi_{2}\Big(\frac{\mbox{\boldmath $\eta$}}{\varepsilon}\Big)\Big)\frac{d\mbox{\boldmath $\eta$}}{\big|\mathbf{x} - \mbox{\boldmath $\eta$}\big|} \\
&=\sum \limits_{\mathbf{m},\mathbf{m}^{\prime}\in \mathbb{Z}^{2}} c(\mathbf{m})c(\mathbf{m}^{\prime})\int_{\mathbb{R}^{2}} \alpha_{1}(t,\mbox{\boldmath $\eta$})\overline{\alpha}_{2}(t,\mbox{\boldmath $\eta$})e^{i(\mathbf{k}_{\mathbf{m}-\mathbf{m}^{\prime}}\cdot \mbox{\boldmath $\eta$})/\varepsilon} \frac{d\mbox{\boldmath $\eta$}}{\big|\mathbf{x} - \mbox{\boldmath $\eta$}\big|}  \\
& \quad  + \sum \limits_{\mathbf{m},\mathbf{m}^{\prime}\in \mathbb{Z}^{2}} \overline{c}(\mathbf{m})\overline{c}(\mathbf{m}^{\prime})\int_{\mathbb{R}^{2}} \overline{\alpha}_{1}(t,\mbox{\boldmath $\eta$})\alpha_{2}(t,\mbox{\boldmath $\eta$})e^{-i(\mathbf{k}_{\mathbf{m}-\mathbf{m}^{\prime}}\cdot \mbox{\boldmath $\eta$})/\varepsilon} \frac{d\mbox{\boldmath $\eta$}}{\big|\mathbf{x} - \mbox{\boldmath $\eta$}\big|}.
\end{align*}
Now, since the kernel $\frac{1}{|\cdot|}$ in two spatial dimensions is integrable at the origin and since $\alpha_{1,2}\in L^2{(\R^2)} \cap L^\infty(\R^2)$, the Riemann-Lebesgue lemma 
implies that, as $\varepsilon \to 0$, all $\eps-$oscillatory terms vanish, i.e. all terms for which $\mathbf{m}\not =\mathbf{m}^{\prime}$. In view of the above identities for the Fourier coefficients we thus find
\begin{align*}
\lim_{\eps \to 0} V_{1}^{\varepsilon}(t,\mathbf{x}) &= \sum \limits_{\mathbf{m}\in \mathbb{Z}^{2}} |c(\mathbf{m})|^{2}\int_{\mathbb{R}^{2}} \Big(\big|\alpha_{1}(t,\mbox{\boldmath $\eta$})\big|^{2} + \big|\alpha_{2}(t,\mbox{\boldmath $\eta$})\big|^{2}\Big)\frac{d\mbox{\boldmath $\eta$}}{\big|\mathbf{x} - \mbox{\boldmath $\eta$}\big|}\\
&=\bigg(\frac{1}{|\cdot|}\ast \Big( \big|\alpha_{1}\big|^{2} + \big|\alpha_{2}\big|^{2}\Big) \bigg)(t,\mathbf{x}),
\end{align*}
whereas 
\begin{align*}
\lim_{\eps \to 0} V_{2}^{\varepsilon}(t,\mathbf{x}) = & \, \sum \limits_{\mathbf{m}\in \mathbb{Z}^{2}} c(\mathbf{m})^{2}\int_{\mathbb{R}^{2}} \alpha_{1}(t,\mbox{\boldmath $\eta$})\overline{\alpha}_{2}(t,\mbox{\boldmath $\eta$})\frac{d\mbox{\boldmath $\eta$}}{\big|\mathbf{x} - \mbox{\boldmath $\eta$}\big|} \\
& \, + \sum \limits_{\mathbf{m}\in \mathbb{Z}^{2}} \overline{c}(\mathbf{m})^{2}\int_{\mathbb{R}^{2}} \overline{\alpha}_{1}(t,\mbox{\boldmath $\eta$})\alpha_{2}(t,\mbox{\boldmath $\eta$})\frac{d\mbox{\boldmath $\eta$}}{\big|\mathbf{x} - \mbox{\boldmath $\eta$}\big|}=0,
\end{align*}
since both sums vanish individually.
\end{proof}

We therefore expect that as $\eps \to 0$, the dynamics of WKB waves spectrally localized around ${\bf K_\ast}$ are governed by the following Dirac-Hartree system:
\begin{equation*}\label{dirachartreee}
\left \{
\begin{aligned} 
& \partial_{t} \alpha_{1} + \overline{\lambda}_{\#} \big(\partial_{x_{1}} + i\partial_{x_{2}}\big)\alpha_{2} = -i\kappa\bigg(\frac{1}{|\cdot|}\ast \Big( \big|\alpha_{1}\big|^{2} + \big|\alpha_{2}\big|^{2}\Big) \bigg)\alpha_{1} , \\ 
    &  \partial_{t} \alpha_{2} + \lambda_{\#}  \big(\partial_{x_{1}} -i \partial_{x_{2}}\big)\alpha_{1} = -i \kappa \bigg(\frac{1}{|\cdot|}\ast \Big( \big|\alpha_{1}\big|^{2} + \big|\alpha_{2}\big|^{2}\Big) \bigg) \alpha_{2} ,
   \end{aligned}
    \right.
\end{equation*}
The Cauchy problem for this system has been rigorously studied in \cite{HaMe} as an ad-hoc model for electrons in graphene (see also \cite{HeTe, NaTs} for related results). 
Our analysis above indicates, that this is indeed the correct model and we 
believe that a fully rigorous proof can be achieved along the same lines as in the case of a cubic nonlinearity. However, a rigorous averaging argument for the required second order approximate solution would 
require considerably more work, a direction we do not want to pursue here. 

\begin{remark}
Let us finally note that while the Hartree nonlinearity is equivalent to a coupling with a Poisson equation $-\Delta V = | \Psi^\eps |^2$ in three spatial dimensions, this is no longer the case in 2D. 
If one were to pursue the coupled system instead of \eqref{hartree}, the effective model obtained would be a Dirac-Poisson system as studied in \cite{ChGl}.
\end{remark}


\bibliographystyle{amsplain}

\end{document}